\DeclareFontShape{T1}{lmr}{b}{sc}{<->ssub*cmr/bx/sc}{}
\DeclareFontShape{T1}{lmr}{bx}{sc}{<->ssub*cmr/bx/sc}{}
\DeclarePairedDelimiter{\lrParens}{(}{)} 
\newcommand{\lrSize}[1]{\left| #1 \right|}
\newcommand{\Oh}[2][*]{%
    \mathcal{O}\ifthenelse{\equal{#1}{*}}{\lrParens*{#2}}{\lrParens[#1]{#2}}%
}
\newcommand{\OhTheta}[2][*]{%
    \Theta\ifthenelse{\equal{#1}{*}}{\lrParens*{#2}}{\lrParens[#1]{#2}}%
}
\newcommand{\OhMega}[2][*]{%
    \Omega\ifthenelse{\equal{#1}{*}}{\lrParens*{#2}}{\lrParens[#1]{#2}}%
}
\newcommand{\logB}{\log_{B}}
\newcommand{\IO}[0]{I/O\xspace}
\newcommand{\IOs}[0]{I/Os\xspace}
\newcommand{\Operation}[1]{\textrm{\textsc{#1}}\xspace}
\newcommand{\Construct}{\Operation{Construct}}
\newcommand{\Insert}{\Operation{Insert}}
\newcommand{\Delete}{\Operation{Delete}}
\newcommand{\ChangeKey}{\Operation{ChangeKey}}
\newcommand{\QueryElement}{\Operation{QueryElement}}
\newcommand{\QueryRank}{\Operation{QueryRank}}
\newcommand{\DecreaseKey}{\Operation{DecreaseKey}}
\newcommand{\Minimum}{\Operation{Minimum}}
\newcommand{\SearchGap}{\Operation{GapByElement}}
\newcommand{\RankGap}{\Operation{GapByRank}}
\newcommand{\IncrementGap}{\Operation{GapIncrement}}
\newcommand{\DecrementGap}{\Operation{GapDecrement}}
\newcommand{\SplitGap}{\Operation{GapSplit}}
\newcommand{\MergeIntervals}{\Operation{IntervalsMerge}}
\newcommand{\InsertInterval}{\Operation{IntervalsInsert}}
\newcommand{\DeleteInterval}{\Operation{IntervalsDelete}}
\newcommand{\ChangeInterval}{\Operation{IntervalsChange}}
\newcommand{\SplitInterval}{\Operation{IntervalSplit}}
\newcommand{\Int}[1]{I_{#1}}
\newcommand{\out}[2][*]{%
    \mathbf{o}%
    \ifthenelse{\equal{#1}{*}}{\lrParens*{\Int{#2}}}{\lrParens[#1]{\Int{#2}}}%
}
\let\oldsubparagraph\subparagraph
\renewcommand{\subparagraph}[1]{\oldsubparagraph{\boldmath #1}}
\title{Towards Lazy B-Trees%
\thanks{\textbf{Erratum:} After the publication of the MFCS proceedings, we noticed that the costs of keeping external pointers to elements in the data structure are not fully taken into account in our analysis (see \wref{sec:contribution} for details).  Therefore, the use of Lazy B-Trees as external-memory priority queue with decrease-key is not currently known to be as favorable as claimed in the conference version. Specifically, for the worst case of having to maintain pointers to all elements, the I/O cost of delete-min in a straight-forward implementation increases to $O(\log N)$ \IOs. }}
\author{Casper Moldrup Rysgaard}{Aarhus University, Denmark}{rysgaard@cs.au.dk}{https://orcid.org/0000-0002-3989-123X}{Independent Research Fund Denmark, grant~9131-00113B.}
\author{Sebastian Wild}{Philipps-University Marburg, Germany, and University of Liverpool, UK}{wild@informatik.uni-marburg.de}{https://orcid.org/0000-0002-6061-9177}{EPSRC grant EP/X039447/1.}
\authorrunning{C. M. Rysgaard, S. Wild.}
\titlerunning{Towards Lazy B-Trees}
\keywords{B-tree, lazy search trees, lazy updates, external memory, deferred data structures, database cracking}
\begin{document}

\maketitle

\begin{abstract}
Lazy search trees (Sandlund \& Wild FOCS 2020, Sandlund \& Zhang SODA 2022) are sorted dictionaries whose update and query performance smoothly interpolates between that of efficient priority queues and binary search trees~-- automatically, depending on actual use; no adjustments are necessary to the data structure to realize the cost savings.
In this paper, we design \emph{lazy B-trees}, a variant of lazy search trees suitable for external memory that generalizes the speedup of B-trees over binary search trees wrt.\ input/output operations to the same smooth interpolation regime.

A key technical difficulty to overcome is the lack of a (fully satisfactory) external variant of \emph{biased} search trees, on which lazy search trees crucially rely.
We give a construction for a subset of performance guarantees sufficient to realize external-memory lazy search trees, which we deem of independent interest.

\end{abstract}

\section{Introduction}
\label{sec:introduction}

We introduce the \emph{lazy B-tree} data structure, which brings the adaptive performance guarantees of lazy search trees to external memory.

The binary search tree (BST) is a fundamental data structure, taught in every computer science degree and widespread in practical use.
Wherever rank-based operations are needed, e.g., finding a $k$th smallest element in a dynamic set or determining the rank of an element in the set, i.e., its position in the sorted order of the current elements, augmented BSTs are the folklore solution: binary search trees using one of the known schemes to ``balance'' them, i.e., guarantee $\Oh{\log N}$ height for a set of size $N$, where we additionally store the subtree size in each node.
On, say, AVL-trees, all operations of a sorted dictionary, i.e., rank, select, membership,
predecessor, successor, minimum, and maximum, as well as insert, delete, change-key, split, and merge
can all be supported in $\Oh{\log N}$ worst case time, where $N$ is the current size of the set.

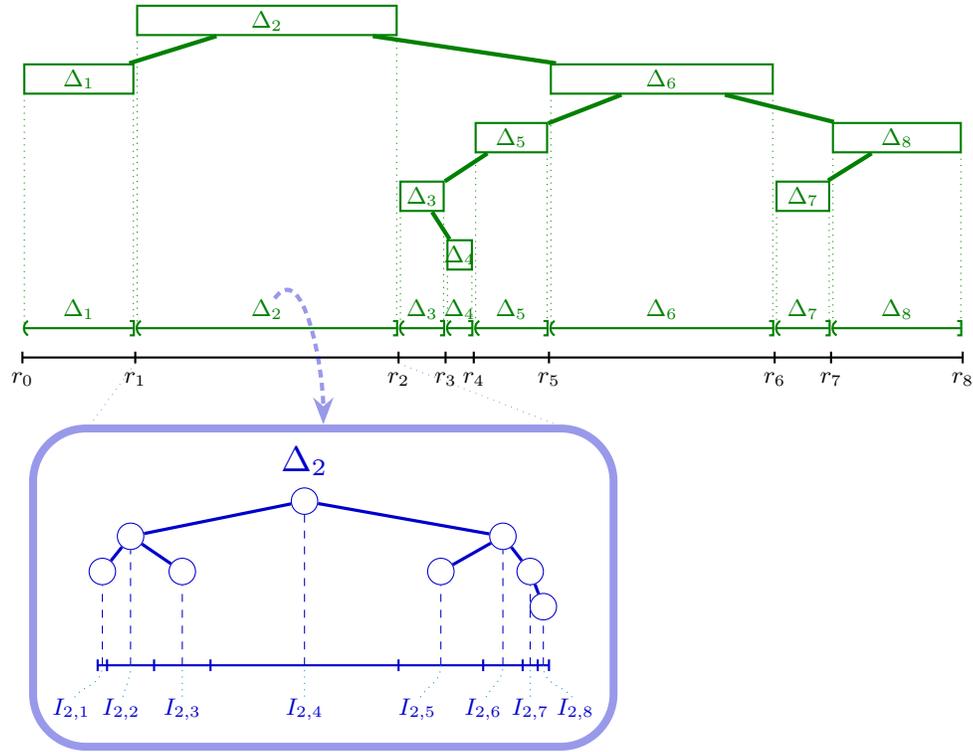
\begin{figure}[t]
	\centering
    
	\begin{tikzpicture}[
			xscale=.5,yscale=.78,
			gap/.style={green!50!black},
			interval/.style={blue!80!black},
			gap node/.style={gap,rectangle,draw,thick,inner sep=0pt,minimum width=0pt},
			interval node/.style={interval,circle,draw,minimum size=10pt},
		]
		\small
		\def\yy{-6}
		
		\foreach [count=\i] \f/\t/\d in {0/3/2,3/10/1,10/11.25/4,11.25/12/5,12/14/3,14/20/2,20/21.5/4,21.5/25/3} {
			\coordinate (dl\i) at (\f+0.05,-\d);
			\coordinate (dr\i) at (\t-0.05,-\d+0.5);
			\node[gap node,fit=(dl\i) (dr\i)] (Delta\i) {} ;
			\node[gap] at (Delta\i) {\small$\Delta_{\i}$};

			\draw[dotted,gap] (\f+0.05,-\d) -- (\f+0.05,\yy) ;
			\draw[dotted,gap] (\t-0.05,-\d) -- (\t-0.05,\yy) ;

			\draw[gap,{Parenthesis[]-Bracket[]},thick] 
				(\f+0.01,\yy) -- node[above=3pt,inner sep=0pt] (D\i) {$\Delta_{\i}$} (\t-0.01,\yy) ;
		}
		\foreach \p/\c in {2/1,2/6,6/5,5/3,3/4,6/8,8/7} {
			\draw[gap,ultra thick] (Delta\p) -- (Delta\c) ;
		}
		
		\begin{scope}[shift={(0,\yy-0.5)}]
			\draw[thick] (0,0) -- (25,0);
			\foreach [count=\i from 0] \x in {0,3,10,11.25,12,14,20,21.5,25} {
				\draw[thick] (\x,.1) -- ++(0,-0.2) node[below] (r\i) {\small $r_{\i}$};
			}
		\end{scope}
		
		\begin{scope}[shift={(2,-8.25)}]
			\node[interval,scale=1.5] at (5.5,0) {$\Delta_2$} ;
			\def\nodes{%
					0/0.25/3,%
					0.25/1.5/2,%
					1.5/3/3,%
					3/8/1,%
					8/10.25/3,%
					10.25/11.3/2,%
					11.3/11.7/3,%
					11.7/12/4%
				}
			\foreach [count=\j] \f/\t/\d in \nodes {
					\node[interval node] (I\j) at (0.5*\f+0.5*\t,-\d*.6-.1) {};
				}
			\begin{scope}[shift={(0,-3.5)}]
				\draw[interval,thick] (0,0) -- (12,0);
			\foreach [count=\j] \f/\t/\ignored in \nodes {
				\draw[interval,thick] (\f,.1) -- ++(0,-0.2);
				\draw[dashed,interval,thin] (0.5*\f+0.5*\t,0) coordinate (II\j) to (I\j) ;
			}
			\draw[interval,thick] (12,.1) -- ++(0,-0.2);
			\end{scope}
			\foreach \p/\c in {4/2,2/1,2/3,4/6,6/5,6/7,7/8} {
				\draw[interval,very thick] (I\p) -- (I\c) ;
			}
			\foreach [count=\j] \x in {-.7,{0.5*1.5-0.5*0.25},2.25,5.5,8.5,10.25,11.5,12.7} {
				\draw[interval] (\x,-4.25) node[fill=white,inner sep=0pt] {\small$I_{2,\j}$} [cyan!80!blue,dotted] to[out=90,in=-90] (II\j);
			}
			\coordinate (ll) at (-1.5,-4.75) ;
			\coordinate (ur) at (13.5,.4) ;
			\begin{scope}[transparency group,opacity=.4]
				\node[interval,draw,line width=3pt,rounded corners=20pt,fit=(ll) (ur)] (box) {} ;
				\draw[interval,ultra thick,densely dashed,-{Stealth[]}] (D2) to[out=45,in=90] (box.90) ;
				\draw[interval,dotted] (r1.90) -- (box.145) (r2.90) -- (box.35) ;
			\end{scope}
		\end{scope}

	\end{tikzpicture}
	\caption{Schematic view of the original lazy search tree data structure~\cite{SandlundWild20}.}
	\label{fig:lst}
\end{figure}

From the theory of efficient priority queues, it is well known that much more efficient implementations
are possible if not all of the above operations have to be supported: When only minimum, insert, delete, decrease-key, and meld are allowed, all can be supported to run in constant time, except for $\Oh{\log N}$ delete.
Lazy search trees~\cite{SandlundWild20,SandlundZhang22} (LSTs) show that we can get the same result with full support for all sorted-dictionary operations, so long as they are not used.  When they do get used, lazy search trees give the optimal guarantees of any comparison-based data structure for the given query ranks, gracefully degrading from priority-queue performance to BST performance, as more and more queries are asked. They therefore serve as a versatile drop-in replacement for both priority queues and BSTs.%
\footnote{
	We point out that a related line of work on ``deferred data structures'' also adapts to the queried ranks
	in the way LSTs do, but all data structures prior to LSTs had $\OhMega{\log N}$ insertion time (with the assumption that any insertion is preceded by a query for said element).  
	These data structures are therefore inherently unable to function as an efficient priority queue.
    Further related work is discussed in \wref{sec:related-work}.
}

The arguably most impactful application of sorted dictionaries across time is the database index.
Here, economizing on accesses to slow secondary storage can outweigh any other costs, which gave rise to external-memory data structures; in particular B-trees and their numerous variants. 
They achieve a $\log_2(B)$-factor speedup over BSTs in terms of input/output operations (I/Os), for $B$ the block size of transfers to the external memory, for scanning-dominated operations such as a range query, even a factor $B$ speedup can be achieved.

Much ingenuity has been devoted to (a) write-optimized variants of B-trees~\cite{Arge1995,BrodalFagerberg03}, often with batched queries, and (b), adaptive indices~\cite{IdreosManegoldGraefe2012}, or more recently learned indices~\cite{KraskaBeutelChiDeanPolyzotis2018} that try to improve query performance via adapting to data at hand.
\emph{Deferred data structuring}~-- known as \emph{database cracking} in the systems community~-- is a special case of (b) that, like lazy search trees, refines a sorted order (of an index) successively upon queries.
Yet in the past, these two approaches often remained incompatible, if not having outright conflicting goals,
making either insertions or queries fast.
Lazy search trees are a principled solution to get both (as detailed in \wref{sec:interface});
unfortunately, standard lazy search trees are not competitive with B-trees in terms of their I/O efficiency, thus losing their adaptive advantages when data resides on secondary storage.

In this paper, we present \emph{lazy B-trees}, an I/O-efficient variant of the original lazy search trees~\cite{SandlundWild20}.%
\footnote{
	The follow-up work~\cite{SandlundZhang22} replaces the second and third layer using \emph{selectable priority queues}; we discuss why these are challenging to make external in \wref{sec:contribution}.
}
Lazy search trees consist of three layers (see \wref{fig:lst}).
The topmost layer consists of a biased search tree of \emph{gaps} (defined in \wref{sec:lazy-search-trees}), weighted by their size.
The second layer consists, for each gap $\Delta_i$, of a balanced binary search tree of $\Oh{\log \lrSize{\Delta_i}}$ \emph{intervals} $I_{i,j}$.  Intervals are separated by splitter elements (pivots as in Quicksort),
but are unsorted within.
The third layer represents intervals as a simple unordered list.

A key obstacle to external-memory variants of lazy search trees~-- both for the original version~\cite{SandlundWild20} and for optimal lazy search trees~\cite{SandlundZhang22}~-- is the topmost layer.
As discussed in \wref{sec:related-work}, none of the existing data structures fully solves the problem of how to design a general-purpose \emph{external-memory biased search tree}~-- and we likewise leave this as an open problem for future work.
However, we show that for a slightly restricted set of performance guarantees sufficient for lazy search trees,
a (to our knowledge) novel construction based on partitioning elements by \emph{weight} with doubly-exponentially increasing bucket sizes provides an I/O-efficient solution.
This forms the basis of our lazy B-trees, which finally bring the adaptive versatility of lazy search trees to external memory.

\subsection{The External-Memory Model}
\label{sec:em_model}

In this paper we study the sorted dictionary problem in a hierarchical-memory model, where we have an unbounded external memory and an internal memory of capacity $M$ elements, and where data is transferred between the internal and external memory in blocks of $B$ consecutive elements. A block transfer is called an \emph{I/O} (input/output operation). The I/O cost of an algorithm is the number of I/Os it  performs. Aggarwal and Vitter~\cite{AggarwalVitter88} introduced this as the \emph{external-memory model} (and proved, e.g., lower bounds for sorting in the model).

\subsection{Lazy Search Trees}
\label{sec:lazy-search-trees}
\label{sec:interface}

In this section we briefly describe lazy search trees~\cite{SandlundWild20}.
Lazy search trees support all operations of a dynamic sorted dictionary (details in~\cite{SandlundWild20}).
We consider the following operations (sufficient to base a full implementation on):

\begin{description}
    \item[$\Construct(S)$:]
        Constructs the data structure from the elements of set $S$.%
        \footnote{\Construct and iterative insertion have the same complexity in internal memory, but in external memory, the bulk operation can be supported more efficiently.}
    \item[$\Insert(e)$:]
        Adds element $e$ to the set.
    \item[$\Delete(\mathit{ptr})$:]
        Deletes the element at pointer $\mathit{ptr}$ from the set.
    \item[$\ChangeKey(\mathit{ptr}, e')$:]
        Changes the element $e$ at pointer $\mathit{ptr}$ to $e'$.
    \item[$\QueryElement(e)$:]
        Locates the predecessor $e'$ to element $e$ in the set, and returns rank $r$ of $e'$, and a pointer to $e'$.
    \item[$\QueryRank(r)$:]
        Locates the element $e$ with rank $r$ in the set, and returns a pointer to $e$.
\end{description}

Lazy search trees maintain the elements of the set partitioned into $G$ \emph{``gaps''}, $\Delta_1,\ldots,\Delta_G$.
All elements in $\Delta_i$ are weakly smaller than all elements in $\Delta_{i+1}$, but the gaps are otherwise treated as unsorted bags.
Boundaries between gaps (and new gaps) are introduced only via a query; initially we have a single gap $\Delta_1$. 
Every query is associated with its \emph{query rank}~$r$~\cite[\S4]{SandlundWild20arxiv}, i.e., the rank of its result,
and splits an existing gap into two new gaps, such that its query rank is the largest rank in the left one of these gaps.
After queries with ranks $r_1<r_2<\cdots<r_q$, we thus obtain the gaps $\Delta_1,\ldots,\Delta_{q+1}$ where $|\Delta_i| = r_i-r_{i-1}$ upon setting $r_0=0$ and $r_{q+1}=n$.

Lazy search trees support insertions landing in gap $\Delta_i$ in $\Oh{\log \lrParens{N/\lrSize{\Delta_i}} + \log \log \lrSize{\Delta_i}}$ (worst-case) time and queries in $\Oh{x \log c + \log n}$ amortized time, where the query splits a gap into new gaps of sizes $x$ and $cx$ for some $c\ge1$.
Deletions are supported in $\Oh{\log n}$ (worst-case) time.
To achieve these times, gaps are further partitioned into intervals, with an amortized splitting and merging scheme (see \wref{sec:interval_structure} for details).

\subsection{Related Work}
\label{sec:related-work}

We survey key prior work in the following;
for a quick overview with selected results and comparison with our new results, see also \wref{tab:overview_of_results}.

\begin{table}[tbp]
    \centering
    
    \caption{Overview of results on Search Trees and Priority Queues for both the internal- and external-memory model. For the internal model the displayed time is the number of RAM operations performed, while for the external model the displayed time is the number of \IOs performed. Amortized times are marked by ``am.'' and results doing batched updates and queries are marked by ``batched''. For priority queues, query is only for the minimum. All results use linear space.}
    \label{tab:overview_of_results}
    
    \newcommand{\lines}[2]{\begin{tabular}{@{}l@{}}#1\\#2\end{tabular}}
    \small
	\setlength{\extrarowheight}{2pt}
    \begin{tabular}{l@{}cc@{}}
    	\toprule
         & \textbf{Insert} & \textbf{Query} \\ 
         \midrule
        \textbf{Internal-Memory} \\
        Balanced BST~\cite{AVL62,GuibasSedgewick78}
        & $\Oh{\log N}$
        & $\Oh{\log N}$ \\
        Priority Queue~\cite{Brodal1996,BrodalLagogiannisTarjan2012}
        & $\Oh{1}$
        & $\Oh{\log N}$ \\
        Lazy Search Tree~\cite{SandlundWild20}
        & $\Oh[\big]{\log \frac{N}{\lrSize{\Delta_i}} + \log \log N}$
        & $\Oh{\log N + x \log c}$ am.\\
        Optimal LST~\cite{SandlundZhang22}
        & $\Oh[\big]{\log \frac{N}{\lrSize{\Delta_i}}}$
        & $\Oh{\log N + x \log c}$ am.
        \\[1ex] 
        \midrule
        \textbf{External-Memory} \\
        B-tree~\cite{BayerMcCreight72}
        & $\Oh{\logB N}$ 
        & $\Oh{\logB N}$ \\
        B$^{\varepsilon}$-tree~\cite{BrodalFagerberg03}
        & $\Oh{\frac{1}{\varepsilon B^{1-\varepsilon}}\logB N}$ am.
        & $\Oh{\frac{1}{\varepsilon}\logB N}$ am. \\
        Buffer Tree~\cite{Arge1995,Arge03}
        & $\Oh{\frac 1B \log_{M/B} \frac{N}{B}}$ batched
        & $\Oh{\frac 1B \log_{M/B} \frac{N}{B}}$ batched \\
        I/O-eff. heap~\cite{KumarS96}
        & $\Oh{\frac1B \log_2 \frac{N}{B}}$ am.
        & $\Oh{\frac1B \log_2 \frac{N}{B}}$ am.\\
        $x$-treap heap~\cite{IaconoJacobTsakalidis2019}
        & $\Oh{\frac1B \log_{M/B} \frac{N}{B}}$ am.
        & $\Oh{\frac{M^\varepsilon}{B} \log^2_{M/B} \frac{N}{B}}$ am. \\[3pt]
        \makecell[l]{\emph{This paper, LST}\\\quad\emph{(\wref{thm:main_structure}})}
        & $\Oh[\big]{\logB \frac{N}{\lrSize{\Delta_i}} + \logB \logB \lrSize{\Delta_i}}$ 
        & \makecell[l]{$\mathcal{O}\big(\logB \min \left\{ N, q \right\} + \frac{1}{B} \log_2 \lrSize{\Delta_i}$\\\quad $ + \logB \logB \lrSize{\Delta_i} + \frac{1}{B} x \log_2 c\big)$ am.}\\[6pt]
        \makecell[l]{\emph{This paper, PQ}\\\quad\emph{(\wref{cor:priority_queue})}}
        & $\Oh{\logB \logB N}$
        & $\Oh{\frac{1}{B} \log_2 N + \logB \logB N}$ am. \\
        \bottomrule
    \end{tabular}
\end{table}

\subparagraph{(External) Search trees.}

Balanced binary search trees (BSTs) exist in many varieties, with AVL trees~\cite{AVL62} and red-black trees~\cite{GuibasSedgewick78} the most widely known. When augmented with subtree sizes,
they support all sorted dictionary operations in $\Oh{\log N}$ worst-case time.
Simpler variants can achieve the same via randomization~\cite{SeidelAragon96,MartinezRoura1998} or amortization~\cite{SleatorTarjan85}.
In external memory, B-trees~\cite{BayerMcCreight72}, often in the leaf-oriented flavor as B${}^+$-trees and augmented with subtree sizes,
are the benchmark.  They support all sorted-dictionary operations in $\Oh{\logB N}$ \IOs.
By batching queries, buffer trees~\cite{Arge1995,Arge03} substantially reduce the cost to amortized
$\Oh[\big]{\frac 1B \log_{M/B} \frac{N}{B}}$ \IOs, but are mostly useful in an offline setting due to the long delay from query to answer.
$\text B^\varepsilon$-trees~\cite{BrodalFagerberg03} avoid this with a smaller buffer of operations per node to achieve amortized $\Oh{\frac{1}{\varepsilon B^{1-\varepsilon}}\logB N}$ \IOs for updates and $\Oh{\frac{1}{\varepsilon}\logB N}$ \IOs for queries (with immediate answers), where $\varepsilon\in(0,1]$ is a parameter.

\subparagraph{Dynamic Optimality.}

The dynamic-optimality conjecture for Splay trees~\cite{SleatorTarjan85} resp. the GreedyBST~\cite{Lucas1988,Munro2000,DemaineHarmonIaconoKanePatrascu2009} algorithm posits that these methods provide an instance-optimal binary-search-tree algorithm for any (long) sequence of searches over a static set of keys in a binary search tree.
While still open for Splay trees and GreedyBST, the dynamic optimality of data structures has been settled in some other models:
it holds for a variant of skip lists~\cite{BoseDouiebLangerman2008}, multi-way branching search trees, and B-trees~\cite{BoseDouiebLangerman2008};
it has been refuted for tournament heaps~\cite{MunroPengWildZhang2019}.
As in lazy search trees, queries clustered in time or space allow a sequence to be served faster.
Unlike in lazy search trees, insertions can remain expensive even when no queries ever happen close to them.
A more detailed technical discussion of similarities and vital differences compared to lazy search trees appears in~\cite{SandlundWild20,SandlundWild20arxiv}.

\subparagraph{(External) Priority Queues.}

When only minimum-queries are allowed, a sorted dictionary becomes a priority queue (PQ) (a.k.a.\ ``heap'').
In internal memory, all operations except delete-min can then be supported in $\Oh{1}$
amortized~\cite{FredmanTarjan1987} or even worst-case time~\cite{Brodal1996,BrodalLagogiannisTarjan2012}.
In external memory, developing efficient PQs has a long history.
For batch use, buffer-tree PQs~\cite{Arge03} and the I/O-efficient heap of~\cite{KumarS96} support $k$ operations of insert and delete-min in $\Oh[\big]{\frac{k}{B} \log_{M/B} \frac{N}{M}}$ \IOs, with $N$ denoting the maximum number of stored keys over the $k$ operations.
The same cost per operation can be achieved in a worst-case sense~\cite{BrodalKatajainen98} (i.e., $B$ subsequent insert/delete-min operations cost $\Oh[\big]{\log_{M/B} \frac{N}{M}}$ \IOs).

None of these external-memory PQs supports decrease-key.
The I/O-efficient tournament trees of~\cite{KumarS96} support a sequence of insert, delete-min, and decrease-key
with $\Oh{\frac1B \log_2 \frac{N}{B}}$ \IOs per operation.
A further $\log\log N$ factor can be shaved off~\cite{JiangLarsen2019} (using randomization), but that, surprisingly, is optimal~\cite{EenbergLarsenYu2017}.
A different trade-off is possible: insert and decrease-key are possible in $\Oh[\big]{\frac1B \log_{M/B} \frac{N}{B}}$ amortized \IOs at the expense of driving up the cost for delete/delete-min to $\Oh[\big]{\frac{M^\varepsilon}{B} \log^2_{M/B} \frac{N}{B}}$~\cite{IaconoJacobTsakalidis2019}.

\subparagraph{(External) Biased Search Trees.}

Biased search trees~\cite{BentSleatorTarjan1985} maintain a sorted set, 
where each element $e$ has a (dynamic) weight $w(e)$, such that operations in the tree spend $\Oh[\big]{\log \frac{W}{w(e)}}$ time, for $W$ the total weight, $W=\sum_e w(e)$.
Biased search trees can be built from height-balanced trees ($(2,b)$-globally-biased trees~\cite{BentSleatorTarjan1985}) or
weight-balanced trees (by representing high-weight elements several times in the tree~\cite{Mehlhorn1984}), and
Splay trees automatically achieve the desired time in an amortized sense~\cite{SleatorTarjan85,Mehlhorn1984}.

None of the original designs are well suited for external memory.
Feigenbaum and Tarjan~\cite{FeigenbaumTarjan1983} extended biased search trees to $(a,b)$-trees for that purpose. However, during the maintenance of $(a,b)$-trees, some internal nodes may have a degree much smaller than $a$ (at least $2$), which means that instead of requiring $\Oh{N/B}$ blocks to store $N$ weighted elements, they require $\Oh N$ blocks in the worst case.%
\footnote{%
In particular, in \cite{FeigenbaumTarjan1983}, they distinguish internal nodes between minor and major, minor being the nodes that have degree $<a$ or have a small rank. All external nodes are major.%
}
The authors in~\cite{FeigenbaumTarjan1983} indeed leave it as an open problem to find a space-efficient version of biased $(a,b)$-trees.
Another attempt at an external biased search tree data structure is based on deterministic skip lists~\cite{BagchiBuchsbaumGoodrich2005}.
Yet again, the space usage seems to be $\OhMega{N}$ blocks of memory.%
\footnote{%
	Unfortunately, it remains rather unclear from the description in the article exactly which parts of the skiplist ``towers'' of pointers of an element are materialized.  Unlike in the unweighted case, an input could have large and small weight alternating, with half the elements of height $\approx \log_b W$.
	Fully materializing the towers would incur $\OhMega{n \log_b W}$ space; otherwise this seems to require a sophisticated scheme to materialize towers on demand, e.g., upon insertions, and we are not aware of a solution.%
}

To our knowledge, no data structure is known that achieves $\Oh[\big]{\logB \frac{W}{w(e)}}$ \IOs for a dynamic weighted set in external memory while using $\Oh{N/B}$ blocks of memory.

\subparagraph{Deferred Data Structures \& Database Cracking.}

Deferred data structuring refers to the idea of successively establishing a query-driven structure on a dataset. 
This is what lazy search trees do upon queries.
While the term is used more generally, it was initially proposed in~\cite{KarpMotwaniRaghavan1988} on the example of a (sorted) dictionary.  For an offline sequence of $q$ queries on a given (static) set of $N$ elements,
their data structure uses $\Oh{N \log q + q\log N}$ time.
In~\cite{ChingMehlhornSmid1990}, this is extended to allow update operations in the same time (where $q$ now counts all operations).
The refined complexity taking query \emph{ranks} into account was originally considered for the (offline) multiple selection problem: when $q$ ranks $r_1<\cdots<r_q$ are sought, leaving gaps $\Delta_1,\ldots,\Delta_{q+1}$,
$\Theta\bigl({\sum_{i=1}^{q+1}|\Delta_i|\log(N/\Delta_i)\bigr)}$ comparisons are necessary and sufficient~\cite{DobkinMunro1981,KaligosiMehlhornMunroSanders2005}.
For multiple selection in external memory, $\Theta\bigl({\sum_{i=1}^{q+1}\frac{|\Delta_i|}{B}\log_{M/B}\frac{N}{|\Delta_i|}}\bigr)$
\IOs are necessary and sufficient~\cite{BarbayGuptaRaoJon2016,BrodalWild2023}, even cache-obliviously~\cite{BrodalWild2023,BrodalWild2024}.

Closest to lazy search trees is the work on \emph{online dynamic multiple selection} by Barbay et al.~\cite{BarbayGuptaRaoJon2015,BarbayGuptaRaoJon2016},
where online refers to the query ranks arriving one by one.
As pointed out in~\cite{SandlundWild20}, the crucial difference between all these works and lazy search trees
is that the analysis of dynamic multiple selection assumes that every insertion is preceded by a query for the element, which implies that insertions must take $\Omega(\log N)$ time.
(They assume a nonempty set of elements to initialize the data structure with, for which no pre-insertion queries are performed.)
Barbay et al.~also consider online dynamic multiple selection in external memory.
By maintaining a B-tree of the pivots for partitioning, they can support updates~-- again, implicitly preceded by a query~-- at a cost of $\Theta(\logB N)$ \IOs each.

In the context of adaptive indexing of databases, deferred data structuring is known under the name
of \emph{database cracking}~\cite{IdreosKerstenManegold2007,Idreos2010,HalimIdreosKarrasYap2012}.
While the focus of research is on systems engineering, e.g., on the partitioning method~\cite{PirkPetrakiIdreosManegoldKersten2014}, some theoretical analyses of devised algorithms have also appeared~\cite{ZardbaniAfshaniKarras2020,Tao2025}.
These consider the worst case for $q$ queries on $N$ elements similar to the original works on deferred data structures.

\subsection{Contribution}
\label{sec:contribution}

Our main contribution, the lazy B-tree data structure, is summarized in \wref{thm:main_structure} below.

\begin{restatable}[Lazy B-Trees]{theorem}{thmMainStructure}
\label{thm:main_structure}
    There exists a data structure over an ordered set, that supports
    \begin{itemize}
        \item $\Construct(S)$ in worst-case $\Oh{\lrSize{S} / B}$ \IOs, 
        \item \Insert in worst-case $\Oh[\big]{\logB \frac{N}{\lrSize{\Delta_i}} + \logB \logB \lrSize{\Delta_i}}$ \IOs,
        \item \Delete in amortized $\Oh[\big]{\logB \frac{N}{\lrSize{\Delta_i}} + \frac{1}{B} \log_2 \lrSize{\Delta_i} + \logB \logB \lrSize{\Delta_i}}$ \IOs,
        \item \ChangeKey in worst-case $\Oh[\big]{\logB \logB \lrSize{\Delta_i}}$ \IOs if the element is moved towards the nearest queried element but not past it, and amortized $\Oh{\frac{1}{B} \log_2 \lrSize{\Delta_i} + \logB \logB \lrSize{\Delta_i}}$ \IOs otherwise,
        and 
        \item \QueryElement and \QueryRank in amortized\\ 
        	$\Oh[\big]{\logB \min \left\{ N, q \right\} + \frac{1}{B} \log_2 \lrSize{\Delta_i} + \logB \logB \lrSize{\Delta_i} + \frac{1}{B} x \log_2 c}$ \IOs.
    \end{itemize}
    Here $N$ denotes the size of the current set, $\lrSize{\Delta_i}$ denotes the size of the manipulated gap, $q$ denotes the number of performed queries, and $x$ and $cx$ for $c \ge 1$ are the resulting sizes of the two gaps produced by a query.
    The space usage is $\Oh{N / B}$ blocks.
    All results ignore the cost of keeping external pointers to elements in the data structure up to date.
\end{restatable}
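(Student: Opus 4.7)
The plan is to realize the three-layer architecture of~\cite{SandlundWild20} in external memory, replacing each internal-memory component with an I/O-efficient counterpart while preserving the amortized potential argument. The bottom layer~-- the unsorted contents of a single interval~-- is stored as a sequence of blocks, so appends, sequential scans, and bulk splits take $\Oh{\lrSize{I}/B}$ \IOs for an interval of size $\lrSize{I}$. The middle layer~-- the $\Oh{\log \lrSize{\Delta_i}}$ intervals of one gap, separated by splitter pivots~-- is held in a size-augmented B-tree on the splitters, so that both a key-based and a rank-based descent run in $\Oh{\logB \logB \lrSize{\Delta_i}}$ \IOs, accounting for that term in every operation bound.

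The top layer~-- the biased search tree of all gaps, weighted by their sizes~-- is the delicate part, because \wref{sec:related-work} noted that no existing external biased search tree achieves $\Oh{\logB(W/w(e))}$ \IOs in linear space. Following the hint in \wref{sec:contribution}, I would partition the gaps into weight classes with doubly-exponentially increasing weight ranges, storing the gaps of each class in a plain size-augmented B-tree ordered by their left-to-right position along the key axis. There are only $\Oh{\logB \logB N}$ classes, and a small top-level directory locates the correct class in $\Oh{\logB \logB N}$ \IOs, while the doubly-exponential growth forces the class holding $\Delta_i$ to be small enough that its B-tree can be descended in $\Oh{\logB(N/\lrSize{\Delta_i})}$ \IOs. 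Weight changes that promote a gap to a new class are rare in the usual doubling sense, so each element pays $\Oh{1/B}$ amortized \IOs per promotion and $\Oh{\logB \logB N}$ in total, absorbed by the ubiquitous second term.

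With the three layers in place, each operation reduces to one descent per layer plus a local maintenance step, handled by replaying the potential argument of~\cite{SandlundWild20,SandlundWild20arxiv} at block granularity. \Insert locates the gap ($\logB(N/\lrSize{\Delta_i})$ \IOs), then the destination interval ($\logB \logB \lrSize{\Delta_i}$ \IOs), and appends to a block. A query scans the target interval and partitions it around the queried rank at amortized cost $\Oh{\frac{1}{B} x \log_2 c}$ \IOs using the recursive size-reduction scheme of~\cite{SandlundWild20} executed on blocks, and then re-buckets the two new gaps in the top layer (its $\logB\min\{N,q\}$ contribution follows from the fact that only $q+1$ gaps can ever exist). \Delete and \ChangeKey reduce to a remove-and-reinsert in the interval layer; when the key stays on the same side of every queried rank the top layer is untouched, giving the fast worst-case \ChangeKey bound. \Construct bulk-writes $S$ as a single interval inside a single gap, costing $\Oh{\lrSize{S}/B}$ \IOs. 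The hard part I anticipate is getting the top biased layer right: proving that the doubly-exponential bucketing handles the continuous stream of weight updates from insertions and query-driven gap splits within the stated bounds, while keeping the whole structure in $\Oh{N/B}$ blocks and supporting splits and merges of gaps efficiently; once that is settled, the interval accounting transfers almost verbatim from~\cite{SandlundWild20}, with $\log$ replaced by $\logB$ along pointer paths and by a $1/B$ factor for sequential block scans.
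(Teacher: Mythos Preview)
Your three-layer plan and the interval-level accounting are essentially what the paper does; the middle and bottom layers transfer as you say. The gap layer, however, is where your sketch departs from the paper in ways that would not deliver the stated bounds.

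First, you cannot ``locate the correct class via a top-level directory'' for $\SearchGap(e)$: you do not know the target gap's weight, so there is no key into a weight-indexed directory. The paper instead searches the buckets $b_0,b_1,b_2,\ldots$ sequentially and exploits that this is an \emph{exact} search (the non-overlapping gaps cover the key axis, so you recognize a hit); doubly-exponential bucket sizes make the total cost $\sum_{j\le k}\Oh{2^j}=\Oh{2^k}=\Oh{\logB(N/|\Delta_i|)}$. Second, and more importantly, the paper does \emph{not} bucket by weight \emph{value} ranges. It buckets by weight \emph{rank}: bucket $b_j$ contains exactly $B^{2^j}$ gaps, the next $B^{2^j}$ heaviest. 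This fixed-cardinality invariant is precisely what makes $\IncrementGap/\DecrementGap$ \emph{worst-case} $\Oh{\logB(N/|\Delta_i|)}$: a $\pm1$ weight change needs at most one swap between each pair of adjacent buckets, and the cost is dominated by the deepest touched bucket. Your ``promotions are rare in the usual doubling sense'' argument is inherently amortized and therefore cannot yield the worst-case $\Insert$ bound, since $\Insert$ must call $\IncrementGap$. Third, $\RankGap$ needs more than per-class subtree sums: a rank query must avoid touching any bucket later (and hence larger) than the one holding the answer. The paper solves this with a ``holes'' augmentation: between consecutive gaps in bucket $b_k$ it stores the total weight of all gaps in later buckets that fall into that hole, so the local-rank contribution of all later buckets can be read off inside $b_k$ itself. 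Maintaining these hole sums under swaps and $\SplitGap$ is a substantial part of the proof that your outline does not yet reach.
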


From the above theorem, the following corollary can be derived, which states the performance of lazy B-trees when used as a priority queue. 

\begin{restatable}[Lazy B-Trees as external PQ]{corollary}{corPriorityQueue}
\label{cor:priority_queue}
    A lazy B-tree may be used as a priority queue, to support, within $\Oh{N / B}$ blocks of space,
    the operations
    \begin{itemize}
        \item $\Construct(S)$ in worst-case $\Oh{\lrSize{S} / B}$ \IOs,
        \item \Insert in worst-case $\Oh{\logB \logB N}$ \IOs,
        \item \Delete in amortized $\Oh{\frac{1}{B} \log_2 N + \logB \logB N}$ \IOs,
        \item \DecreaseKey in worst-case $\Oh{\logB \logB N}$ \IOs, and 
        \item \Minimum in amortized $\Oh{\frac{1}{B} \log_2 N + \logB \logB N}$ \IOs.
    \end{itemize}
    All results ignore the cost of keeping external pointers to elements in the data structure up to date.
\end{restatable}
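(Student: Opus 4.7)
The plan is to derive \wref{cor:priority_queue} from \wref{thm:main_structure} by specialising to the priority-queue usage pattern, where the only query is \Minimum (a rank-$1$ query) and every split therefore occurs at rank~$1$.

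Structurally, in this pattern the elements always lie either in a single ``bulk'' gap of size $\Theta(N)$, or in a bulk gap together with a singleton gap on the left containing the most recently queried minimum (awaiting its matching \Delete). In particular, $\lrSize{\Delta_i} = \Theta(N)$ whenever an operation targets the bulk gap. On top of this, I would maintain an auxiliary direct pointer to the leftmost gap, so that the current minimum can be accessed, and the singleton removed, with $\Oh{1}$ extra \IOs; this sidesteps the $\logB \min\{N,q\}$ term from navigating the biased search tree of gaps in \wref{thm:main_structure}.

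With these two observations in place, substituting $\lrSize{\Delta_i} = \Theta(N)$, $x = 1$, and $c = \Theta(N)$ into the respective bounds of \wref{thm:main_structure} directly yields each bound of the corollary: \Insert into the bulk gap costs $\Oh{\logB(N/N) + \logB \logB N} = \Oh{\logB \logB N}$; \Delete of a bulk-gap element costs amortised $\Oh{\frac{1}{B}\log_2 N + \logB \logB N}$; a \Minimum costs amortised $\Oh{\frac{1}{B}\log_2 N + \logB \logB N + \frac{1}{B}\cdot 1\cdot \log_2 N}$, which collapses to the claimed bound; and \DecreaseKey within the bulk gap falls into the fast case of \ChangeKey (``moved towards the nearest queried element but not past it''), giving worst-case $\Oh{\logB \logB N}$.

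The one delicate step is the corner case of a \DecreaseKey whose new key would drop strictly below a pending singleton minimum: read literally, this falls into the slower amortised branch of \ChangeKey and threatens the worst-case \DecreaseKey bound. The cleanest resolution is to expose \Minimum together with the subsequent \Delete of the returned minimum as an atomic extract-min operation, so that no singleton is present when the user issues a \DecreaseKey. An alternative, if one wishes to keep \Minimum and \Delete separate, is to first merge the pending singleton back into the bulk gap at cost $\Oh{\logB \logB N}$ (a single update to the biased tree of gaps) before performing the in-gap decrease; either way the worst-case \DecreaseKey bound is preserved, and the $\Oh{N/B}$-block space bound carries over unchanged from \wref{thm:main_structure}.
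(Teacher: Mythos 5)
Your proposal is correct and takes essentially the same route as the paper: specialize \wref{thm:main_structure} to the usage pattern in which the only query is $\QueryRank(1)$, observe that the structure is always a single $1$-sided gap (queried side to the left) except transiently between a \Minimum and its matching \Delete, and plug in $\lrSize{\Delta_i}=\Theta(N)$, $x=1$, $c=\Theta(N)$. One small remark on emphasis: the auxiliary pointer you introduce to ``sidestep'' the $\logB\min\{N,q\}$ term is not strictly needed, and it also only short-circuits the \emph{search} part of a query, not the $\Oh{\logB G}$ cost of \SplitGap; the observation that actually suffices (and is what the paper implicitly uses by noting ``the search time in the gap structure reduces to $\Oh{1}$'') is that the gap structure holds at most two gaps, so \emph{all} gap-structure operations, including \SplitGap, cost $\Oh{1}$ \IOs directly from \wref{thm:gap_structure}. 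Your corner case about a \DecreaseKey dropping below a pending singleton is a genuine subtlety; the paper resolves it implicitly through its standing assumption that the minimum is deleted before the next query and that the structure therefore ``only ever has a single $1$-sided gap,'' which in effect is your atomic extract-min resolution, made implicit rather than explicit. Your second proposed resolution (merging the singleton back into the bulk gap) is plausible in spirit but would need spelling out, since gap merging is not an operation the gap structure of \wref{sec:gap_structure} exposes; the atomic extract-min route is cleaner and matches the paper.
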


\medskip

The running times of lazy B-trees when used as a priority queue are not competitive with heaps 
targeting sorting complexity (such as buffer trees~\cite{Arge1995,Arge03}); 
however these data structures do not (and cannot~\cite{EenbergLarsenYu2017}) support decrease-key efficiently.
By contrast, for very large $N$, lazy B-trees offer exponentially faster decrease-key and insert than previously known external priority queues, while only moderately slowing down delete-min queries.

Our key technical contribution is a novel technique for partially supporting external biased search tree performance, formally stated in \wref{thm:gap_structure} below. 
In the language of a biased search tree, it supports 
searches (by value or rank)
as well as
incrementing or decrementing%
\footnote{%
    General weight changes are possible in that time with $w(e)$ the minimum of the old and new weight.
}
a weight $w(e)$ by $1$
in $\Oh{\logB(W/w(e))}$ \IOs for an element $e$ or weight $w(e)$;
inserting or deleting an element, however, takes $\Oh{\logB N}$ \IOs irrespective of weight, where $N$ is the number of elements currently stored.
Unlike previous approaches, the space usage is the $\Oh{N/B}$ blocks throughout.
A second technical contribution is the streamlined potential-function-based analysis of 
the interval data structure of lazy search trees.

We mention three, as yet insurmountable, shortcomings of lazy B-trees.
The first one is the $\log \log N$ term we inherit from the original Lazy search trees~\cite{SandlundWild20}. This cost term is in addition to the multiple-selection lower bound and thus not necessary.  Indeed, it was in internal memory subsequently removed~\cite{SandlundZhang22}, using an entirely different representation of gaps,
which fundamentally relies on soft-heap-based selection on a priority queue~\cite{KaplanKozmaZamirZwick2019}.
The route to an external memory version of this construction is currently obstructed by two road blocks.
First, we need an external-memory soft heap; the only known result in this direction~\cite{BhushanGopalan2012}
only gives performance guarantees when $N = \Oh[\big]{B (M/B)^{M/2(B+\sqrt{M/B})}}$ and hence seems not to represent a solution for the general problem.
Second, the selection algorithm from~\cite{SandlundZhang22} requires further properties of the priority queue implementation, in particular a bound on the fanout; it is not clear how to combine this with known external priority queues.

The second shortcoming is that~-- unlike for comparisons~-- we do not get close to the \IO-lower bound for multiple-selection with lazy B-trees.  
Doing so seems to require a way of buffering as in buffer trees, to replace our fanout of $B$ by a fanout of $M/B$.
This again seems hard to achieve since an \emph{insertion} in the lazy search tree uses a \emph{query} on the gap data structure, and a \emph{query} on the lazy search tree entails an \emph{insertion} into the gap data structure (plus a re-weighting operation). 

Third, the above running times do \emph{not} include the cost to keep external pointers to elements (as used for \Delete{} or \DecreaseKey) up to date.
When splitting or fusing nodes of the blocked-linked list representing intervals (as described in detail in \wref{app:interval_structure_full_version}), we have to update pointers to elements that have moved between blocks. Since these external pointers may reside scattered across memory, we may have to pay one \IO per pointer, which increases the worst-case cost contribution $\frac1B x \log_2 c$ in queries to $x \log_2 c$.

\subparagraph{Outline.}

The remainder of the paper is structured as follows.
\wref{sec:gap_structure} describes the gap data structure (our partial external biased search tree)
and key innovation.
\wref{sec:interval_structure} sketches the changes needed to turn the interval data structure from~\cite{SandlundWild20} into an \IO-efficient data structure;
the full details, including our streamlined potential function, appear in \wref{app:interval_structure_full_version}.
In \wref{sec:combined_structure}, we then show how to assemble the pieces into a proof of our main result, \wref{thm:main_structure}.
We conclude in \wref{sec:open_problems} with some open problems.
To be self-contained, we include proofs of some technical lemmas used in the analysis in \wref{app:support_lemma}.

\section{The Gap Structure: A New Restricted External Biased Search Tree}
\label{sec:gap_structure}

In this section we present a structure on the gaps, which allows for the following operations.
Let $N$ denote the total number of elements over all gaps and $G$ denote the number of gaps. Note that $G \le N$, as no gaps are empty.
We let $\Delta_i$ denote the $i$th gap when sorted by element, and $\lrSize{\Delta_i}$ denote the number of elements contained in gap $\Delta_i$.

\begin{description}
    \item[$\SearchGap(e)$]
      Locates the gap $\Delta_i$ containing element $e$.
    \item[$\RankGap(r)$]
      Locates the gap $\Delta_i$ containing the element of rank $r$.
    \item[$\IncrementGap(\Delta_i)$ / $\DecrementGap(\Delta_i)$]
      Changes the weight of gap $\Delta_i$ by $1$.
    \item[$\SplitGap(\Delta_i, \Delta'_i, \Delta'_{i + 1})$]
     Splits gap $\Delta_i$ into the two non-overlapping gaps $\Delta'_i$ and $\Delta'_{i + 1}$, s.t. the elements of $\Delta_i$ are equal to the elements of $\Delta'_i$ and $\Delta'_{i + 1}$.
\end{description}

For the operations, we obtain \IO costs, as described in the theorem below.

\begin{theorem}[Gap Data Structure]
\label{thm:gap_structure}
    There exists a data structure on a weighted ordered set, that supports \SearchGap, \RankGap, \IncrementGap and \DecrementGap in $\Oh[\big]{\logB \frac{N}{\lrSize{\Delta_i}}}$ \IOs, and \SplitGap in $\Oh{\logB G}$ \IOs.
    Here $N$ denotes the total size of all gaps, $\lrSize{\Delta_i}$ denotes the size of the touched gap and $G$ denotes the total number of gaps.
    The space usage is $\Oh{G / B}$ blocks.
\end{theorem}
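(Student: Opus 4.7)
The plan is to realize a ``biased'' external search tree on the gaps by partitioning them into a small number of weight classes and maintaining one B-tree per class.

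Concretely, fix weight thresholds $s_0 < s_1 < \cdots < s_K$ with $K = \Oh{\log \logB N}$, chosen so that $\logB(N/s_k)$ decays geometrically in $k$; the canonical choice $s_k = N^{1 - 1/2^k}$ gives $\logB(N/s_k) = (\logB N)/2^k$. Class $C_k$ holds those gaps with $\lrSize{\Delta_i} \in [s_k, s_{k+1})$; since each such gap contributes weight at least $s_k$, we have $|C_k| \le N/s_k$. Store each $C_k$ in a key-ordered B-tree $T_k$, whose depth is $\Oh{\logB(N/s_k)} = \Oh{(\logB N)/2^k}$. The geometric decay of $|C_k|$ across $k$ makes the overall space $\Oh{G/B}$ blocks.

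For \SearchGap (and symmetrically \RankGap), I traverse the classes from heaviest to lightest: for $j = K, K-1, \ldots, 0$, search $T_j$ for $e$ and stop as soon as a gap in $T_j$ contains $e$. If the target gap lies in class $C_k$, the total cost is
\[
    \sum_{j=k}^{K} \Oh[\big]{\frac{\logB N}{2^j}} \;=\; \Oh[\big]{\frac{\logB N}{2^k}} \;=\; \Oh[\big]{\logB \frac{N}{\lrSize{\Delta_i}}}
\]
by geometric summation, matching the theorem. To serve \RankGap, I augment each node of every $T_k$ with the total weight---summed over \emph{all} classes---of the gaps whose keys fall in that subtree's range; rank descents then use these sums as navigators, and weight updates propagate the sums through the heavier classes $T_{k+1}, \ldots, T_K$ at cumulative cost $\sum_{j > k} \Oh{(\logB N)/2^j} = \Oh{\logB(N/\lrSize{\Delta_i})}$.

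For \IncrementGap and \DecrementGap, the common case keeps $\lrSize{\Delta_i}$ within its class $C_k$, reducing to a search plus an in-place weight update at cost $\Oh{\logB(N/\lrSize{\Delta_i})}$. When the weight crosses a threshold $s_k$, we move $\Delta_i$ between $T_{k-1}$ and $T_k$ at worst-case cost $\Oh{\logB(N/s_{k-1})}$; since at least $\Theta(s_k - s_{k-1}) = \Theta(s_k)$ operations must occur between consecutive crossings, this cost amortizes away. For \SplitGap, removing the parent gap and inserting the two child gaps into their respective classes costs $\Oh{\logB G}$ \IOs in total, as required.

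The main obstacle I anticipate is the dependence of the canonical thresholds $s_k = N^{1 - 1/2^k}$ on the dynamic value of $N$. I plan to resolve this via a standard doubling/halving scheme: maintain a reference $\tilde N$, use $s_k = \tilde N^{1 - 1/2^k}$, and rebuild the entire structure from scratch whenever $N$ leaves $[\tilde N/2,\, 2\tilde N]$. A rebuild costs $\Oh{G/B}$ \IOs, which amortizes to $\Oh{1/B}$ per update; the factor-$2$ slack perturbs the thresholds only by constant factors and preserves all asymptotics. A secondary concern is maintaining the cross-class subtree sums used by \RankGap without inflating the update cost; the geometric decay of tree depths across heavier classes should bound the propagation, but care is needed when a gap migrates between classes so as not to double-count its contribution to higher-level sums.
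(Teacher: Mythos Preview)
Your high-level strategy---partition the gaps into weight classes, maintain one key-ordered B-tree per class, and search the classes from heaviest to lightest---is exactly the paper's strategy. The crucial difference is the partitioning criterion. You bucket by weight \emph{value} with thresholds $s_k = N^{1-1/2^k}$; the paper buckets by weight \emph{rank}, giving bucket $b_j$ exactly $B^{2^j}$ gaps regardless of the actual weights. The paper's choice has a real payoff: since the bucket sizes do not depend on $N$, there is no need for any doubling/halving rebuild, and all bounds are worst-case. Your rebuild scheme works, but it forces amortized bounds and adds a layer of complexity the paper avoids entirely. Also, your amortization argument for threshold crossings (``at least $\Theta(s_k - s_{k-1})$ operations between consecutive crossings'') is wrong as stated---a gap can oscillate across $s_k$ with one increment and one decrement---though in fact the cost of a single crossing is already $\Oh{\logB(N/w_i)}$ worst-case, so no amortization is needed there.

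The genuine gap in your proposal is the \RankGap augmentation. You write that each $T_k$ stores subtree sums ``over \emph{all} classes'', yet you only propagate updates to the \emph{heavier} classes $T_{k+1},\ldots,T_K$. These are inconsistent: if $T_j$ for $j<k$ truly stores all-class sums, a weight change in $C_k$ would require touching the big trees $T_0,\ldots,T_{k-1}$ at cost $\Theta(\logB N)$. The correct invariant (and what the paper does) is one-directional: each bucket stores, in the ``holes'' between its consecutive gaps, only the weight of gaps in \emph{lighter} buckets; then a weight change in $C_k$ propagates only upward. But this creates the hard part you flag but do not solve: when a gap migrates between classes, inserting it into its new tree splits a hole, and you must determine how the hole's weight (from all lighter classes) distributes left versus right of the new gap. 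The paper handles this by always moving gaps via swaps between \emph{adjacent} buckets and showing that the global rank of the moved gap---computable within the stated budget because swaps only go through buckets already being touched---suffices to split the hole correctly. Your sketch inserts children of \SplitGap directly into arbitrary classes without explaining how to split the holes there, and the same issue arises for threshold crossings. This is the technically delicate part of the construction and needs to be worked out, not deferred.
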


\begin{remark}[Comparison with biased search trees]
A possible solution would be an external-memory version of biased search trees,
but as discussed in the introduction, no fully satisfactory such data structure is known.
Instead of supporting all operations of biased trees in full generality, we here opt for a solution solving (just) the case at hand.
The solution falls short of a general biased search trees, as the insertion or deletion costs are not a function of the \emph{weight} of the affected gap, but the \emph{total number} $G$ of gaps in the structure.
Moreover, we only describe how to change the weight of gaps by $1$; however, general weight changes could be supported, with the size of the change entering the \IO cost, matching what we expect from a general biased search tree.
\end{remark}

Note that the gaps in the structure are non-overlapping, and that their union covers the whole element range. The query for an element contained in some gap is therefore a predecessor query on the left side of the gaps, however, as their union covers the entire element range, the queries on the gap structure behave like \emph{exact queries}: we can detect whether we have found the correct gap and can then terminate the search.
(By definition, there are no unsuccessful searches in our scenario, either.)

For consistency with the notation of biased search trees, we write in the following $w_i = \lrSize{\Delta_i}$ and $W = \sum_i w_i$. Note that $W = N$.
Consider a conceptual list, where the gaps are sorted decreasingly by \emph{weight}, and let gap~$\Delta_i$ be located at some index $\ell$ in this list. This conceptual list is illustrated in \wref{fig:conceptual_list_and_splits}, and \wref{fig:two_dimentional_gaps} gives a two-dimensional view of the list, which considers both the gap weight as well as the gaps ordered by element. As the total weight before index $\ell$ in the conceptual list is at most $W$ and the weight of each gap before index~$\ell$ is at least $w_i$, then it must hold that~$\ell \le \frac{W}{w_i}$.
If we were to search for a gap of known \emph{weight}, it can therefore be found with an exponential search~\cite{BentleyYao76} in time $\Oh{\log \ell} = \Oh[\big]{\log \frac{W}{w_i}}$. 
However, searches are based on \emph{element values} (and, e.g., for insertions, without knowing the target gap's weight), so this alone would not work.

\begin{figure}[tb]
    \centering
    
    \begin{tikzpicture}[scale=0.7]
        \foreach \i in {0, ..., 16} \draw ($(\i, 0) + (0.5, 0)$) -- ++(0, 1);
        \draw (-0.5, 0) rectangle ++(18, 1);
        \draw (17.5, 0) -- ++(0.5,0) -- ++(-0.25,0.25) -- ++(0.25,0.25) -- ++(-0.25,0.25) -- ++(0.25,0.25) -- ++(-0.5,0);

        \foreach \i in {0, ..., 8} \node at (\i, -0.3) {\tiny $\i$};
        \node at (9, -0.3) {\tiny $\dots$};
        \node at (10, -0.3) {\tiny $\ell$};
        \node at (11, -0.3) {\tiny $\dots$};
        
        \foreach \i [count=\j from 0] in {4, 9, 7, 2, 8, 6, 3, 5, 1} \node at (\j, 0.5) {\small $\Delta_\i$};
        \node at (10, 0.5) {\small $\Delta_i$};

        \foreach \i in {1.5, 5.5} \draw[densely dashed, very thick, gray] (\i, -0.5) -- (\i, 1.8);
        \foreach \i [count=\j from 0] in {0.5, 3.5, 11.5} \node[gray] at (\i, 1.5) {\small $b_\j$};
        \node[gray] at (-.8,1.5) {\footnotesize \emph{early}} ;
        \node[gray] at (17.5,1.5) {\footnotesize \emph{late buckets}} ;
    \end{tikzpicture}

    \caption{The conceptual list of the gaps. The gaps are sorted by decreasing weight, with the heaviest gap (largest weight), $\Delta_4$ at index $0$.
    Gap $\Delta_i$ with weight $w_i$ is stored at index $\ell \le W/w_i$ in the list.
    The gaps are split into buckets $b_0, b_1, b_2, \ldots$ of doubly exponential size.}
    \label{fig:conceptual_list_and_splits}
\end{figure}
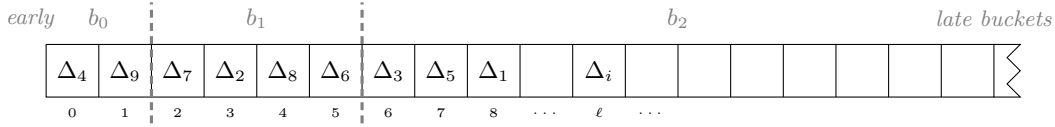

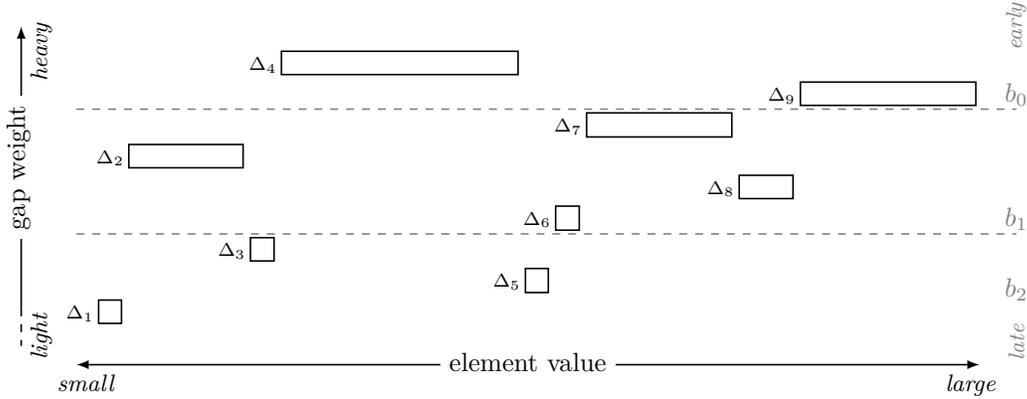
\begin{figure}[tb]
    \centering
    
    \begin{tikzpicture}[xscale=.98,semithick]
        \draw[latex-latex] (0,-0.5) -- +(12.25,0) node[midway, fill=white, yshift=1pt] {element value}
        	node[pos=0.01,below] {\small \itshape small}
        	node[pos=.99,below] {\small \itshape large}
        	;
        \draw[dashed] (-0.75,0.25) -- +(0,-0.5);
        \draw[-latex] (-0.75,0.25) -- +(0,3.75) node[midway, fill=white, rotate=90, yshift=1pt] {gap weight}
              node[pos=-0.1,below,sloped] {\small \itshape light}
              node[pos=.91,below,sloped] {\small \itshape heavy}
        ;
		\begin{scope}[shift={(.25,0)}]
            \def\margin{0.05}
            \foreach \i/\l/\r/\t/\b in {4/2.48/5.79/3.72/3.31, 9/9.52/12.0/3.31/2.9, 7/6.62/8.69/2.9/2.48, 2/0.41/2.06/2.48/2.07, 8/8.69/9.52/2.07/1.66, 6/6.2/6.62/1.66/1.24, 3/2.06/2.48/1.24/0.83, 5/5.79/6.2/0.83/0.41, 1/0.0/0.41/0.41/0.0} {
                \draw (\l+\margin, \t-\margin) node[xshift=-7pt, yshift=-5pt] {\scriptsize $\Delta_{\i}$} rectangle (\r-\margin, \b+\margin);
            }
    
            \draw[dashed, gray] (-0.25, 2.9) -- (12.5, 2.9);
            \draw[dashed, gray] (-0.25, 1.24) -- (12.5, 1.24);
            \node[gray,rotate=90] at (12.5,4) {\footnotesize\itshape early} ;
            \node[gray] at (12.5, 3.1) {$b_0$};
            \node[gray] at (12.5, 1.44) {$b_1$};
            \node[gray] at (12.5, 0.5) {$b_2$};
            \node[gray,rotate=90] at (12.5,-.2) {\footnotesize\itshape late} ;
        \end{scope}
    \end{tikzpicture}
    
    \caption{Two dimensional view of the first gaps of \wref{fig:conceptual_list_and_splits} represented by rectangles, with the width of the rectangle denoting the weight (size) of the gap. The horizontal axis sorts gaps by element value; the vertical axis by weight. Dashed lines show bucket boundaries.
    }
    \label{fig:two_dimentional_gaps}
\end{figure}

\subsection{Buckets by Weight and Searching by Element}

Instead, the gaps are split into buckets $b_j$, s.t.\ the weight of all gaps in bucket $b_j$ is greater than or equal to the weight of all gaps in bucket $b_{j + 1}$.
We further impose the invariant that the size of (number of gaps in) bucket $b_j$ is \emph{exactly} $\smash{B^{2^j}}$ for all but the last bucket, which may be of any (smaller) size.
Each bucket is a B-tree containing the gaps in the bucket, sorted by element value.
The time to search for a given element in bucket $b_j$ is therefore $\Oh{\logB(|b_j|)} = \Oh{2^j}$ \IOs.
For consistent wording, we will in the following always use \emph{smaller/larger} to refer to the order by element values,
\emph{lighter/heavier} for gaps of smaller/larger weight/size, and \emph{earlier/later} for buckets of smaller/larger index.
Note that earlier buckets contain fewer, but heavier gaps.

\subparagraph{\RankGap.}

A search for a gap proceeds by searching in buckets $b_0, b_1, b_2, \ldots$ until the desired gap is found in some bucket $b_k$. To search in all buckets up until $b_k$ requires $\sum_{j=0}^k \Oh{2^j} = \Oh{2^k}$ \IOs, which is therefore up to constant factors the same cost as searching in only bucket $b_k$.
Consider some gap $\Delta_i$, which has the $\ell$th heaviest weight, i.e., it is located at index $\ell$, when sorting all gaps decreasingly by weight.
By the invariant, the buckets are sorted decreasingly by the weight of their internal elements. Let the bucket containing $\Delta_i$ be $b_k$. It must then hold that the sizes of the earlier buckets does not allow for $\Delta_i$ to be included, but that bucket $b_k$ does. Therefore,
\[ \sum_{j=0}^{k-1} \lrSize{b_j} \;<\; \ell \;\le\; \sum_{j=0}^{k} \lrSize{b_j} \; . \]
As $\lrSize{b_j} = B^{2^j}$, the sums are asymptotically equal to the last term of the sum up to constant factors (\wref{lem:theta_double_exp_sum} in \wref{app:support_lemma}).
It then holds that $\ell = \Oh[\big]{B^{2^k}}$ and $\ell = \OhMega[\big]{B^{2^{k-1}}} = \OhMega[\big]{(B^{2^k})^{1/2}}$.
Thus $\logB(\ell) = \OhTheta{\logB \lrSize{b_k}}$, and conversely $2^k = \OhTheta{\logB \ell}$.
The gap $\Delta_i$ can thus be found using $\Oh[\big]{\logB \frac{W}{w_i}}$~\IOs, concluding the \SearchGap operation.

\subsection{Updating the Weights}

Both explicit weight changes in \IncrementGap/\DecrementGap as well as the \SplitGap operation
require changes to the weights of gaps. Here, we have to maintain the sorting of buckets by weight.

\subparagraph{\IncrementGap/\DecrementGap.}
When performing \IncrementGap on gap $\Delta_i$, the weight is increased by $1$. In the conceptual list, this may result in the gap moving to an earlier index, with the order of the other gaps the same.
This move is made in the conceptual list (\wref{fig:conceptual_list_and_splits}) by swapping $\Delta_i$ with its neighbor. 
As the buckets are ordered by element value, swapping with a neighbor (by weight) in the same bucket does not change the bucket structure. When swapping with a neighbor in an earlier bucket, however, the bucket structure must change to reflect this.
Following the conceptual list (\wref{fig:conceptual_list_and_splits}), this gap is a lightest (minimum-weight) gap in bucket $b_{k-1}$.
This may then result in $\Delta_i$ moving out of its current bucket $b_k$, into some earlier bucket.
As the gap moves to an earlier bucket, some other gap must take its place in bucket $b_k$. Following the conceptual list, it holds that this gap is a lightest (minimum-weight) gap in bucket $b_{k-1}$, and so on for the remaining bucket moves.

When performing \DecrementGap, the gap may move to a later bucket, where the heaviest gap in bucket $b_{k+1}$ is moved to bucket $b_k$.

In both cases, a lightest gap in one bucket is swapped with a heaviest gap in the neighboring (later) bucket.
To find lightest or heaviest gaps in a bucket, we augment the nodes of the B-tree of each bucket with the values of the lightest and heaviest weights in the subtree for each of its children. This allows for finding a lightest or heaviest weight gap in bucket $b_j$ efficiently, using $\Oh{\logB{\lrSize{b_j}}}$ \IOs. Further, this augmentation can be maintained under insertions and deletions.

Upon a \IncrementGap or \DecrementGap operation, the desired gap $\Delta_i$ can be located in the structure using $\Oh[\big]{\logB \frac{W}{w_i}}$~\IOs, and the weight is adjusted.
We must then perform swaps between the buckets, until the invariant that buckets are sorted decreasingly by weight holds.
This swapping may be performed on all buckets up to the last touched bucket.
For \IncrementGap the swapping only applies to earlier buckets as the weight of $\Delta_i$ increases, and the height of the tree in the latest bucket is $\Oh[\big]{\logB \frac{W}{w_i}}$.
For \DecrementGap the swapping is performed into later buckets, but only up to the final landing place for weight $w_i-1$.
If $w_i \ge 2$, the height of the tree in the last touched bucket is $\Oh[\big]{\logB \frac{W}{w_i - 1}}=\Oh[\big]{1+\logB \frac{W}{w_i}}$.
If $w_i=1$, gap $\Delta_i$ is to be deleted. The gap is swapped until it is located in the last bucket, from where it is removed. We have $\Oh[\big]{\logB \frac{W}{w_i}} = \Oh[\big]{\logB W}$, as $w_i = 1$, whereas the height of the last bucket is $\Oh{\logB G} = \Oh{\logB W}$, as $G \le W$.
Therefore both \IncrementGap and \DecrementGap can be performed using $\Oh[\big]{\logB \frac{W}{w_i}}$~\IOs.

\subparagraph{\SplitGap.}

When \SplitGap is performed on gap $\Delta_i$, the gap is removed and replaced by two new gaps $\Delta'_i$ and $\Delta'_{i + 1}$, s.t.\ $\lrSize{\Delta_i} = \lrSize{\Delta'_i} + \lrSize{\Delta'_{\smash{i + 1}}}$. In the conceptual list, the new gaps must reside at later indexes than $\Delta_i$.
As the sizes of the buckets are fixed, and the number of gaps increases, some gap must move to the last bucket.
Similarly to \IncrementGap and \DecrementGap, the order can be maintained by performing swaps of gaps between consecutive buckets:
First, we replace $\Delta_i$ (in its spot) by $\Delta'_{i}$; if this violates the ordering of weights between buckets,
we swap $\Delta'_{i}$ with the later neighboring bucket, until the correct bucket is reached.
Then we insert $\Delta'_{i+1}$ into the last bucket and swap it with its earlier neighboring bucket until it, too, has reached its bucket.
Both processes touch at most all $\log_B \log_2 G$ buckets and spend a total of $\Oh{\logB G}$ \IOs.

\subsection{Supporting Rank Queries}

The final operation to support is \RankGap.
Let the \emph{global rank} of a gap $\Delta_i$, $r(\Delta_i) = |\Delta_1| + \cdots+ |\Delta_{i-1}|$, denote the rank of the smallest element located in the gap, i.e., the number of elements residing in gaps of smaller elements.
The \emph{local rank} of a gap $\Delta_i$ in bucket $b_j$, $r_j(\Delta_i)$, denotes the rank of the gap among elements located in bucket $b_j$ only.

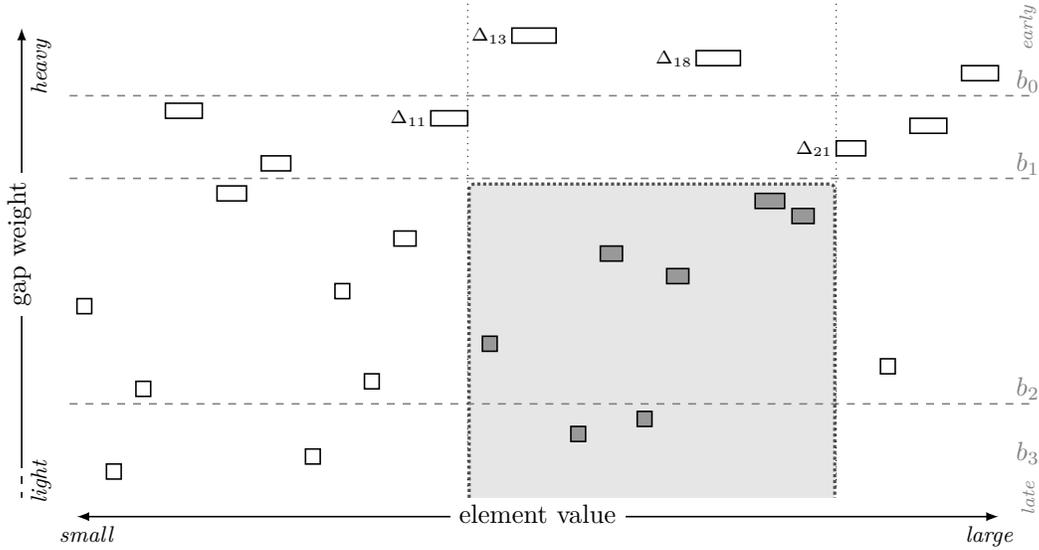
\begin{figure}[tb]
    \centering
    
    \begin{tikzpicture}[semithick,xscale=.98,yscale=1]
        \draw[latex-latex] (0,-0.5) -- +(12.5,0) node[midway, fill=white, yshift=1pt] {element value}
        	node[pos=0.01,below] {\footnotesize \itshape small}
        	node[pos=.99,below] {\footnotesize \itshape large}
        	;
        \draw[dashed] (-0.75,0.25) -- +(0,-0.5);
        \draw[-latex] (-0.75,0.25) -- +(0,5.75) node[midway, fill=white, rotate=90, yshift=1pt] {gap weight}
              node[pos=-0.05,below,sloped] {\footnotesize \itshape light}
              node[pos=.91,below,sloped] {\footnotesize \itshape heavy}
        ;
        
        \draw[darkgray, very thick,densely dotted,rounded corners=2pt,fill=black!10] (5.32, -0.25) -- (5.32, 3.925) -- (10.28, 3.925) -- (10.28, -0.25);

        \draw[dashed, gray] (-0.1, 5.1) -- +(13.1, 0);
        \draw[dashed, gray] (-0.1, 4) -- +(13.1, 0);
        \draw[dashed, gray] (-0.1, 1) -- +(13.1, 0);
        \node[gray] at (12.9, 5.3) {$b_0$};
        \node[gray] at (12.9, 4.2) {$b_1$};
        \node[gray] at (12.9, 1.2) {$b_2$};
        \node[gray] at (12.9, 0.35) {$b_3$};
        \node[gray,rotate=90] at (12.9,6) {\scriptsize\itshape early} ;
        \node[gray,rotate=90] at (12.9,-.25) {\scriptsize\itshape late} ;

        \foreach \x/\y/\w/\c in {0.0/2.2/0.2/0, 0.4/0.0/0.2/0, 0.8/1.1/0.2/0, 1.2/4.8/0.5/0, 1.9/3.7/0.4/0, 2.5/4.1/0.4/0, 3.1/0.2/0.2/0, 3.5/2.4/0.2/0, 3.9/1.2/0.2/0, 4.3/3.1/0.3/0, 4.8/4.7/0.5/0, 5.5/1.7/0.2/1, 5.9/5.8/0.6/0, 6.7/0.5/0.2/1, 7.1/2.9/0.3/1, 7.6/0.7/0.2/1, 8.0/2.6/0.3/1, 8.4/5.5/0.6/0, 9.2/3.6/0.4/1, 9.7/3.4/0.3/1, 10.3/4.3/0.4/0, 10.9/1.4/0.2/0, 11.3/4.6/0.5/0, 12.0/5.3/0.5/0} {
            \if \c 1
                \draw[fill=black!40] (\x, \y) rectangle ++(\w, 0.2);
            \else
                \draw (\x, \y) rectangle ++(\w, 0.2);
            \fi
        }

        \foreach \x/\y/\name in {4.8/4.7/11, 5.9/5.8/13, 8.4/5.5/18, 10.3/4.3/21} {
            \node at (\x-0.3,\y+0.1) {\scriptsize $\Delta_{\name}$};
        }

        \foreach \x in {5.3,10.3} {
            \draw[thin,dotted] (\x,-0.1) -- ++(0,6.5) ;
        }
    \end{tikzpicture}

    \caption{Illustration of ``holes''. The figure shows the two-dimensional view of gaps described in \wref{fig:two_dimentional_gaps} (gap widths not to scale).
    The \emph{hole} between $\Delta_{11}$ and $\Delta_{21}$ in bucket $b_1$, marked by the dotted line, contains the total weight of all gaps marked in gray.
    These are precisely the gaps that fall both (a) between $\Delta_{11}$ and $\Delta_{21}$ by element value and (b) in later buckets by (lighter) weight.
    Note specifically that $\Delta_{13}$ and $\Delta_{18}$ are \emph{not} counted in the hole, as they reside in an earlier bucket.
    }
    \label{fig:two_dimentional_with_hole}
\end{figure}

\subparagraph{Computing the rank of a gap.}

To determine the global rank of a gap $\Delta_i$, the total weight of all smaller gaps must be computed. This is equivalent to the sum of the local ranks of $\Delta_i$ over all buckets: $r(\Delta_i) = \sum_{j} r_j(\Delta_i)$.
First we augment the B-trees of the buckets, such that each node contains the total weight of all gaps in the subtree. 
This allows us to compute the local rank $r_j(\Delta_i)$ of a gap $\Delta_i$ inside a bucket $b_j$ using $\Oh{\logB \lrSize{b_j}}$ \IOs.  
The augmented values can be maintained under insertions and deletions in the tree within the stated \IO cost.
When searching for the gap $\Delta_i$, the local rank $r_j(\Delta_i)$ of all earlier buckets $b_j$ up to the bucket $b_k$, which contains $\Delta_i$, can then be computed using $\Oh[\big]{\logB \frac{W}{w_i}}$ \IOs in total.

It then remains to compute the total size of gaps smaller than $\Delta_i$ in all buckets after $b_k$, i.e., the sum of the local ranks $r_\ell(\Delta_i)$ for all later buckets $b_\ell$, $\ell > k$, to compute in total the global rank $r(\Delta_i)$. As these buckets are far bigger, we cannot afford to query them.
Note that any gaps in these later buckets must fall between two \emph{consecutive} gaps in~$b_k$, as the gaps are non overlapping in element-value space.
Denote the space between two gaps in a bucket as a \emph{hole}.
We then further augment the B-tree to contain in each hole of bucket $b_j$ the total size of gaps of \emph{later} buckets contained in that hole, and let each node contain the total size of all holes in the subtree.
See \wref{fig:two_dimentional_with_hole} for an illustration of which gaps contributes to a hole.

This then allows computing the global rank $r(\Delta_i)$ of gap $\Delta_i$, by first computing the local rank $r_j(\Delta_i)$ in all earlier buckets $b_j$, i.e., for all $j \le k$, and then adding the total size of all smaller holes in $b_k$.
The smaller holes in $b_k$ exactly sum to the local ranks $r_\ell(\Delta_i)$ for all later buckets $b_\ell$ for $\ell > k$.
As this in total computes $\sum_j r_j(\Delta_i)$, then by definition, we obtain the global rank $r(\Delta_i)$ of the gap $\Delta_i$.

\subparagraph{\RankGap.}

For a \RankGap operation, some rank $r$ is given, and the gap $\Delta_i$ is to be found, s.t. $\Delta_i$ contains the element of rank $r$, i.e., the global rank $r(\Delta_i) \le r < r(\Delta_{i+1})$.
The procedure is as follows.
First, the location of the queried rank $r$ is computed in the first bucket. If this location is contained in a gap $\Delta_i$ of the bucket, then $\Delta_i$ must be the correct gap, which is then returned.
Otherwise, the location of $r$ must be in a hole. As the sizes of the gaps contained in the first bucket is not reflected in the augmentation of later buckets, the value of $r$ is updated to reflect the missing gaps of the first bucket: we subtract from $t$ the local rank $r_0(\Delta_i)$ of the gap $\Delta_i$ immediately after the hole containing the queried rank $r$.
This adjusts the queried rank $r$ to be relative to elements in gaps of later buckets. 
Put differently, the initial queried rank $r$ is the queried global rank, which is the sum local ranks;
we now remove the first local rank again.
This step is recursively applied to later buckets, until the correct gap $\Delta_i$ is found in some bucket $b_k$, which contains the element of the initially queried global rank $r$.
As the correct gap $\Delta_i$ must be found in the bucket $b_k$ containing it, the \RankGap operation uses $\Oh[\big]{\logB \frac{W}{w_i}}$ \IOs to locate it. 

\subparagraph{Maintaining hole sizes.}

The augmentation storing the sizes of all holes in a subtree can be updated efficiently upon insertions or deletions of gaps in the tree, or upon updating the size of a hole.
However, computing the sizes of the holes upon updates in the tree is not trivial.
If a gap is removed from a bucket, the holes to the left resp.\ right of that gap are merged into a single hole, with a size equal to the sum of the previous holes.
If the removed gap is moved to a later bucket, the hole must now also include the size of the removed gap.
A newly inserted gap must, by the non-overlapping nature of gaps, land within a hole of the bucket, which is now split in two.
If the global rank of the inserted gap is known, then the sizes of the resulting two holes can be computed, s.t. the global rank is preserved.

In the operations on the gap structure, \SearchGap or \RankGap do not change the sizes of gaps, so the augmentation does not change.
Upon a \IncrementGap or \DecrementGap operation, the size of some gap $\Delta_i$ changes.
This change in size can be applied to all holes containing $\Delta_i$ in earlier buckets using $\Oh[\big]{\logB \frac{W}{w_i}}$ \IOs in total.
Then swaps are performed between neighboring buckets until the invariant that buckets are sorted by weight, holds again. 
During these swaps, the sizes of gaps do not change, which allows for computing the global rank of the gaps being moved, and update the augmentation without any overhead in the asymptotic number of \IOs performed. 
The total number of \IOs to perform a \IncrementGap or \DecrementGap operation does not increase asymptotically, and therefore remains $\Oh[\big]{\logB \frac{W}{w_i}}$.

When a \SplitGap is performed, a single gap $\Delta_i$ is split. 
As the elements in the two new gaps $\Delta'_i$ and $\Delta'_{i + 1}$ remain the same as those of the old gap $\Delta_i$, there cannot be another gap between them.
The value in all smaller holes therefore remains correct. Moving $\Delta'_{i+1}$ to the last bucket then only needs updating the value of the holes of the intermediate buckets, which touches at most all $\log_B \log_2 G$ buckets spending $\Oh{\logB G}$ \IOs in total. Swaps are then performed, where updating the augmentation does not change the asymptotic number of \IOs performed.

\subparagraph{Space Usage.}
To bound the space usage, note that the augmentation of the B-trees at most increase the node size by a constant factor. Since the space usage of a B-tree is linear in the number of stored elements, and since each gap is contained in a single bucket, the space usage is $\Oh{G / B}$ blocks in total.
This concludes the proof of \wref{thm:gap_structure}.

\section{The Interval Structure}
\label{sec:interval_structure}

A gap $\Delta_i$ contains all elements in the range the gap covers.
Note that by \wref{thm:main_structure},
a query on the gap for an element must be \emph{faster} for elements \emph{closer} to the border of a gap; information-theoretically speaking, such queries reveal less information about the data. 
It is therefore not sufficient to store the elements in the gap in a single list.

\subparagraph{External memory interval data structure.}
We follow the construction of Sandlund and Wild~\cite{SandlundWild20}, in their design of the interval structure. In this section we present the construction, and argue on the number of \IOs this construction uses.
The main difference from the original construction is in the speed-up provided by the external-memory model; namely that scanning a list is faster by a factor $B$, and that B-trees allows for more efficient searching. These differences are also what allows for slightly improving the overall \IO cost of the original construction, when moving it to the external-model.
Due to space constraints, the full analysis can be found in \wref{app:interval_structure_full_version}.

We allow for the following operations:

\begin{description}
    \item[$\InsertInterval(e)$:]
        Inserts element $e$ into the structure.
    \item[$\DeleteInterval(\mathit{ptr})$:]
        Deletes the element $e$ at pointer $\mathit{ptr}$ from the structure.
    \item[$\ChangeInterval(\mathit{ptr}, e')$:]
        Changes the element $e$ at pointer $\mathit{ptr}$ to the element $e'$.
    \item[$\SplitInterval(e)$ or $\SplitInterval(r)$:]
        Splits the set of intervals into two sets of intervals at element $e$ or rank $r$.
\end{description}

A gap has a \emph{``sidedness''}, which denotes the number of sides the gap has had a query. 
Denote a side as a \emph{queried side}, if that rank has been queried before (cf.\ \cite{SandlundWild20}).
If there have been no queries yet, the (single) gap is a $0$-sided gap.
When a query in a $0$-sided gap occurs, two new gaps are created which are both $1$-sided. 
Note that ``$1$-sided'' does not specify which side was queried~-- left or right.
When queries have touched both sides, the gap is $2$-sided.

We obtain the following \IO costs for the above operations.
\begin{theorem}[Interval Data Structure]
\label{thm:interval_structure}
    There exists a data structure on an ordered set, maintaining a set of intervals, supporting 
    \begin{itemize}
        \item \InsertInterval in worst-case $\Oh{\logB \logB \lrSize{\Delta_i}}$ \IOs, 
        \item \DeleteInterval in amortized $\Oh{\frac{1}{B} \log_2 \lrSize{\Delta_i} + \logB \logB \lrSize{\Delta_i}}$ \IOs, 
        \item \ChangeInterval in worst-case $\Oh{\logB \logB \lrSize{\Delta_i}}$ \IOs, if the element is moved towards the nearest queried side or amortized $\Oh{\frac{1}{B} \log_2 \lrSize{\Delta_i} + \logB \logB \lrSize{\Delta_i}}$ \IOs otherwise, and
        \item \SplitInterval in amortized $\Oh{\frac{1}{B} \log_2 \lrSize{\Delta_i} + \logB \logB \lrSize{\Delta_i} + \frac{1}{B} x \log_2 c}$ \IOs.
    \end{itemize}
    Here $\lrSize{\Delta_i}$ denotes the number of elements contained in all intervals, and $x$ and $cx$ for $c \ge 1$ are the resulting sizes of the two sets created by a split.
    The space usage is $\Oh{\lrSize{\Delta_i} / B}$~blocks.
\end{theorem}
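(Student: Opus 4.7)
The plan is to adapt the interval structure of Sandlund and Wild~\cite{SandlundWild20} to external memory, exploiting block transfers for interval scans and a B-tree for navigating between intervals. Within each gap $\Delta_i$, the intervals are arranged with geometrically increasing sizes away from each queried side of the gap, so there are $O(\log |\Delta_i|)$ intervals in total; either choice of growth rate $2$ or $B$ yields a B-tree of height $\Theta(\log_B \log_B |\Delta_i|)$ on top. A B-tree indexed by pivots supports locating an interval by element value or rank in $O(\log_B \log_B |\Delta_i|)$ I/Os, and each interval's contents are stored as an unordered sequence of blocks, giving an overall space of $O(|\Delta_i|/B)$ blocks. The sidedness (0, 1, or 2) of the gap fixes the direction of the geometric progression: a 1-sided gap grows its intervals away from the single queried side, a 2-sided gap from both sides toward the middle.

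For \InsertInterval I would descend the B-tree to the smallest interval on the relevant queried side and append the element to a dedicated last-block buffer of that interval. The navigation costs $O(\log_B \log_B |\Delta_i|)$ I/Os and the append is worst-case $O(1)$, as overflow from the buffer is a pointer update rather than a scan. No rebalancing is performed; size-invariant repair is deferred to the amortized operations. The cheap case of \ChangeInterval, where the element moves toward its nearest queried side without crossing a pivot, reduces to a local delete--insert pair into an interval no further from that side than the source, and hence fits the same worst-case bound.

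For \DeleteInterval, the expensive case of \ChangeInterval, and \SplitInterval, the dominant cost is the occasional scan of an interval during a merge, refill, or partition. I would introduce a potential function $\Phi$ that assigns each interval a value proportional to its ``excess'' above the geometric target size, scaled by $1/B$ to mirror the block-scan speedup. A merge or refill of an $s$-element interval then costs $O(\lceil s/B\rceil)$ I/Os but releases $\Omega(s/B)$ potential, so it is paid for. Chains of cascading merges have length $O(\log |\Delta_i|)$ and amortize to $O(\tfrac{1}{B}\log_2 |\Delta_i|)$ I/Os per operation. For \SplitInterval, the partitioning of the interval straddling the split point contributes the extra $O(\tfrac{1}{B} x \log_2 c)$ term via the standard pivot-based analysis from~\cite{SandlundWild20}, run at block granularity.

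The main obstacle is calibrating the potential so that the $1/B$ block-I/O savings propagate cleanly through every case. Naively scaling the internal-memory LST potential by $1/B$ is not quite correct, because a rebalance on a sub-block-sized interval still performs $\Omega(1)$ I/Os and must be charged to a block-level ``carry'' account that absorbs up to $B$ element-level changes per interval before any structural work is done. This same carry block is what guarantees that \InsertInterval and the cheap case of \ChangeInterval remain worst-case rather than merely amortized. Once these two design choices are in place, the streamlined potential function of \wref{app:interval_structure_full_version} telescopes correctly and yields the claimed bounds.
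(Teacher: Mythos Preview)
Your high-level picture is right (B-tree over $O(\log|\Delta_i|)$ intervals, blocked-linked-list per interval, a $1/B$-scaled potential), but several concrete steps do not match the data structure and would not give the stated bounds.

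First, \InsertInterval cannot append to ``the smallest interval on the relevant queried side''. The intervals are separated by pivots, so an inserted element $e$ must go into the unique interval whose range contains $e$; otherwise the sorted-between-intervals invariant is destroyed and later queries are incorrect. In the paper this is exactly what happens: locate the containing interval via the B-tree and append there. The worst-case bound then follows directly, because the actual cost is $O(\logB\logB|\Delta_i|)$ and the potential only \emph{increases} (by $O(1/B)$); no ``carry block'' mechanism is needed or used. The same remark applies to your cheap \ChangeInterval case: the theorem's cheap case is when the element moves toward the nearest queried side, possibly crossing several pivots; it is cheap because such a move can only \emph{increase} the outside values of the affected intervals and hence never increases $\Phi$.

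Second, your potential is missing two ingredients that the analysis actually relies on. The paper uses
\[
\Phi \;=\; \tfrac{1}{B}N_{01} \;+\; \sum_{i,j}\Bigl(1+\tfrac{1}{B}\max\{|I_{i,j}|-\out{i,j},0\}\Bigr).
\]
The additive $+1$ per interval is what pays the $O(1)$ \IO{} for concatenating two blocked lists during \MergeIntervals; this is how sub-block merges are amortized, not via a separate carry account. The $\tfrac{1}{B}N_{01}$ term is essential for \SplitInterval on a $1$-sided gap whose queried side is on the \emph{large} ($cx$) side: there $\out{i,j}\le cx$ rather than $\le x$, and only the release of $\tfrac{1}{B}cx$ from $N_{01}$ cancels the scan cost. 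Without that term your ``standard pivot-based analysis'' does not telescope to $O(\tfrac1B x\log_2 c)$ in this case. The paper's proof also needs a case analysis (intervals that stay on their side versus those that switch, bounded via \wref{lem:merge_interval_onesided_count_bound}) rather than a single telescoping argument.
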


\subparagraph{Intervals in external memory.}
Let the gap $\Delta_i$ contain multiple non-overlapping intervals $\Int{i, j}$, which contain the elements located in the gap.
The elements of the intervals are sorted between intervals, but not within an interval. Intervals therefore span a range of elements with known endpoints.
Each such interval is a blocked-linked-list containing the elements of the interval.
Additionally, we store an augmented B-tree over the intervals, allowing for efficiently locating the interval containing a given element (using $\Oh{\logB(\#\text{intervals})}$ \IOs).
The B-tree is augmented to hold in each node the total sizes of all intervals in the subtrees of the node, which allows for efficient rank queries.

By packing all intervals into a single blocked-linked-list, and noting that there can be no more intervals than there are elements, the space usage of the intervals and the augmented B-tree over the intervals is $\Oh{\lrSize{\Delta_i} / B}$ blocks.

\subparagraph{Intuition of interval maintenance of~\cite{SandlundWild20}.}
Intuitively, there is a trade-off in maintaining intervals: having many small intervals reduces future query costs since these are typically dominated by the linear cost of splitting one interval; but it increases the time to search for the correct interval upon insertions (and other operations).
Lazy search trees handle this trade-off as follows.
To bound the worst case insertion costs, we enforce a hard limit of $\Oh{\log_2 |\Delta_i|}$ on the number of intervals in a gap $\Delta_i$, implemented via a merging rule.
To amortize occasional high costs for queries, we accrue potential for any intervals that have grown ``too large'' relative to their proximity to a gap boundary.

Given the logarithmic number of intervals, the best case for queries would be to have interval sizes grow exponentially towards the middle. This makes processing of intervals close to the boundary (i.e., close to previous queries) cheap; for intervals close to the middle, we can afford query costs linear in $\Delta_i$.
It can be shown that for exponentially growing intervals, the increase of the lower bound from any query allows us to precisely pay for the incurred splitting cost.
However, the folklore rule of having each interval, say, 2--4 times bigger than the previous interval seems too rigid to maintain. Upon queries, it triggers too many changes.

\subparagraph{Merging \& Potential.}
We therefore allow intervals to grow bigger than they are supposed to be, but charge them for doing so in the potential.
By enforcing the merging rule when the number of elements in the gap decreases, we maintain the invariant that the number of intervals is $\Oh{\log_2 \lrSize{\Delta_i}}$, which allows for locating an interval using  $\Oh{\logB \log_2 \lrSize{\Delta_i}} = \Oh{\logB \logB \lrSize{\Delta_i}}$ \IOs.
Enforcing the merging rule is achieved by scanning the intervals using the B-tree, and it can be shown that this operation uses amortized $\Oh{\frac{1}{B} \log_2 \lrSize{\Delta_i}}$ \IOs. Merging intervals takes $\Oh{1}$ \IO, but by adding extra potential to intervals, this step of the merge can be amortized away.

Upon this, we can show that \InsertInterval uses $\Oh{\logB \logB \lrSize{\Delta_i}}$ \IOs, and that \DeleteInterval uses amortized $\Oh{\frac{1}{B} \log_2 \lrSize{\Delta_i} + \logB \logB \lrSize{\Delta_i}}$ \IOs.
When performing \ChangeInterval, moving the element from one interval to another uses $\Oh{\logB \logB \lrSize{\Delta_i}}$ \IOs. However, the potential function causes an increase of $\Oh{\frac{1}{B} \log_2 \lrSize{\Delta_i}}$~\IOs in the amortized cost when an element is moved away from the closest queried side of the gap.

When a query occurs, an interval $\Int{i, j}$ is split around some element or rank. The interval can be located using the B-tree over intervals on either element or rank in $\Oh{\logB \logB \lrSize{\Delta_i}}$ \IOs.
For splitting around an element, the interval is scanned and partitioned using $\Oh{\lrSize{\Int{i,j}} / B}$ \IOs.
For splitting around a rank, deterministic selection~\cite{BlumFloydPrattRivestTarjan1973} is applied, which uses $\Oh{\lrSize{\Int{i,j}} / B}$~\IOs (see, e.g.,~\cite{BrodalWild2023,BrodalWild2024}).
In both cases the number of \IOs grows with the interval size.
Analyzing the change in the potential function upon this split, we can show that \SplitInterval uses amortized $\Oh{\frac{1}{B} \log_2 \lrSize{\Delta_i} + \logB \logB \lrSize{\Delta_i} + \frac{1}{B} x \log_2 c}$ \IOs.

This (together with the details from \wref{app:interval_structure_full_version}) concludes the proof of \wref{thm:interval_structure}.

\section{Lazy B-Trees}
\label{sec:combined_structure}

In this section, we combine the gap structure of \wref{sec:gap_structure} and the interval structure of \wref{sec:interval_structure}, to achieve an external-memory lazy search-tree.
Recall our goal, \wref{thm:main_structure}.

\thmMainStructure*

We use the \IO bounds shown in \wref[Theorems]{thm:gap_structure} and~\ref{thm:interval_structure} to bound the cost of the combined operations. These operations are performed as follows.

A $\Construct(S)$ operation is performed by creating a single gap over all elements in $S$, and in the gap create a single interval with all elements.
This can be done by a single scan of~$S$, and assuming that $S$ is represented compactly/contiguously, this uses $\Oh{\lrSize{S} / B}$ \IOs.

To perform an $\Insert(e)$ operation, \SearchGap is performed to find the gap $\Delta_i$ containing element $e$.
Next, \InsertInterval is performed to insert $e$ into the interval structure of gap $\Delta_i$.
Finally, \IncrementGap is performed on $\Delta_i$, as the size has increased.
In total this uses worst-case $\Oh[\big]{\logB \frac{N}{\lrSize{\Delta_i}} + \logB \logB \lrSize{\Delta_i}}$ \IOs.

Similarly, upon a $\Delete(\mathit{ptr})$ operation, \DeleteInterval is performed using $\mathit{ptr}$, to remove the element $e$ at the pointer location.
Next \DecrementGap is performed on the gap $\Delta_i$.
In total this uses amortized $\Oh[\big]{\logB \frac{N}{\lrSize{\Delta_i}} + \frac{1}{B} \log_2 \lrSize{\Delta_i} + \logB \logB \lrSize{\Delta_i}}$~\IOs.

A \ChangeKey operation may only change an element, s.t.\ it remains within the same gap (otherwise we need to use \Delete and \Insert). This operation therefore does not need to perform operations on the gap structure, but only on the interval structure of the relevant gap. The operation is therefore performed directly using the \ChangeInterval operation.

The \QueryElement and \QueryRank operations are performed in similar fashions.
First, \SearchGap or \RankGap is performed to find the relevant gap $\Delta_i$ containing the queried element.
Next \SplitInterval is performed on the interval structure, which yields two new gaps $\Delta'_i$ and $\Delta'_{i+1}$, which is updated into the gap structure using \SplitGap.
Note that the number of gaps is bounded by the number of elements $N$, but also by the number of queries $q$ performed, as only queries introduce new gaps.
In total, the \QueryElement and \QueryRank operations uses amortized $\Oh{\logB \min \left\{ N, q \right\} + \frac{1}{B} \log_2 \lrSize{\Delta_i} + \logB \logB \lrSize{\Delta_i} + \frac{1}{B} x \log_2 c}$ \IOs.

The space usage of the gap structure is $\Oh{G / B} = \Oh{N / B}$ blocks.
For gap $\Delta_i$, the space usage of the interval structure is $\Oh{\lrSize{\Delta_i} / B}$ blocks.
If $\lrSize{\Delta_j} = \Oh{B}$, for some $\Delta_j$, we cannot simply sum over all the substructures. By instead representing all such ``lightweight'' gaps in a single blocked-linked-list, we obtain that the space usage over the combined structure is $\Oh{N / B}$ blocks. 
The gaps are located in the combined list using the gap structure, and updates or queries to the elements of the gap may then be performed using $\Oh{1}$ \IOs, allowing the stated time bounds to hold. Similarly, an interval structure may be created for a gap, when it contains $\OhMega{B}$ elements using $\Oh{1}$ \IOs. The stated time bounds therefore still applies to the lightweight gaps.

This concludes the proof of \wref{thm:main_structure}.

\subparagraph{Priority Queue Case.}
When a lazy B-tree is used a priority queue, the queries are only performed on the element of smallest rank, and this element is deleted before a new query is performed.
If this behavior is maintained, we first note that the structure only ever has a single $1$-sided gap of size $N$, with the queried side to the left.
This allows for \Insert to be performed in worst-case $\Oh{\logB \logB N}$ \IOs, as the search time in the gap structure reduces to $\Oh{1}$.
Similarly, \Delete is performed using amortized $\Oh{\frac{1}{B} \log_2 N + \logB \logB N}$ \IOs.
To perform the \DecreaseKey operation, a \ChangeKey operation is performed. As the queried side is to the left, this must move the element towards the closest queried side, leading to a \DecreaseKey operation being performed using worst-case $\Oh{\logB \logB N}$ \IOs.
Finally, \Minimum is performed as a $\QueryRank(r)$ with $r=1$. As this must split the gap at $x = 1$ and $cx = N-1$, the operation is performed using amortized $\Oh{\frac{1}{B} \log_2 N + \logB \logB N}$ \IOs.
This concludes the proof of \wref{cor:priority_queue}.

\section{Open Problems}
\label{sec:open_problems}

The internal lazy search trees~\cite{SandlundWild20} discuss further two operations on the structure; \emph{\textsc{Split}} and \emph{\textsc{Merge}}, which allows for splitting or merging disjoint lazy search tree structures around some query element or rank. These operations are supported as efficient as queries in the internal-model.
In the external-memory case, the gap structure designed in this paper, however, does not easily allow for splitting or merging the set of gaps around an element or rank, as the gaps are stored by element in disjoint buckets, where all buckets must be updated upon executing these extra operations.
By sorting and scanning the gaps, the operations may be supported, but not as efficiently as a simple query, but instead in sorting time relative to the number of gaps.
It remains an open problem to design a gap structure, which allows for efficiently supporting \textsc{Split} and \textsc{Merge}.

In the external-memory model, bulk operations are generally faster, as scanning consecutive elements saves a factor $B$ \IOs. One such operation is \emph{\textsc{RangeQuery}}, where a range of elements may be queried and reported at once. In a B-tree, this operation is supported in $\Oh{\logB N + k/N}$ \IOs, when $k$ elements are reported.
The lazy B-tree designed in this paper allows for efficiently reporting the elements of a range in unsorted order, by querying the endpoints of the range, and then reporting all elements of the gaps between the endpoints.
Note that for the priority-queue case, this allows reporting the $k$ smallest elements in $\Oh{\frac{1}{B}k \log_2 \frac{N}{k} + \frac{1}{B} \log_2 N + \logB \logB N}$ \IOs.
If the elements must be reported in sorted order, sorting may be applied to the reported elements. However, this effectively queries all elements of the range, and should therefore be reflected into the lazy B-tree. As this introduces many gaps of small size, the \IO cost increases over simply sorting the elements of the range.
It remains an open problem on how to efficiently support sorted \textsc{RangeQuery} operations, while maintaining the properties of lazy B-trees.

In the internal-memory model, an optimal version of lazy search trees was constructed~\cite{SandlundZhang22}, which gets rid of the added $\log \log N$ term on all operations.
It remains an open problem to similarly get rid of the added $\logB \logB N$ term for the external-memory model.

Improvements on external-memory efficient search trees and priority queues use buffering of updates to move more data in a single \IO, thus improving the \IO cost of operations.
The first hurdle to overcome in order to create buffered lazy B-trees is creating buffered biased trees used for the gap structure. By buffering updates to the gaps, it must then hold that the weights are not correctly updated, which imposes problems on searching; both by element and rank.
It remains an open problem to overcome this first hurdle as a step towards buffering lazy B-trees.

\bibliographystyle{plainurl}
\bibliography{references}

\newpage
\appendix

\section{The Interval Structure}
\label{app:interval_structure_full_version}

In this appendix, we give the full description and detailed analysis of the interval data structure
outlined in \wref{sec:interval_structure}.
We follow the construction of Sandlund and Wild~\cite{SandlundWild20}, in their design of the interval structure, with some improvements for external memory.
To give a self-contained analysis of the number of \IOs used by all operations,
we reproduce the analysis from~\cite{SandlundWild20,SandlundWild20arxiv}
and spell out all cases explicitly.

\textbf{Note that we neglect the cost for keeping external pointers to elements up to date.}
When splitting or fusing nodes of the blocked-linked list, we have to update pointers to elements that have moved between blocks. Since these external pointers may reside scattered across memory, we may have to pay one \IO per pointer, which increases the worst-case cost of splits by a factor of $B$ (using potential $1 + \max \left\{ \lrSize{\Int{i, j}} - \out{i, j} , 0 \right\} \}$ for $\Phi_{i,j}$).

A gap $\Delta_i$ contains all elements in the range the gap covers.
Recall that a query on the gap for an element must be faster for elements closer to the border of a gap. 
We allow for the following operations:

\begin{description}
    \item[$\InsertInterval(e)$:]
        Inserts element $e$ into the structure.
    \item[$\DeleteInterval(\mathit{ptr})$:]
        Deletes the element $e$ at pointer $\mathit{ptr}$ from the structure.
    \item[$\ChangeInterval(\mathit{ptr}, e')$:]
        Changes the element $e$ at pointer $\mathit{ptr}$ to the element $e'$.
    \item[$\SplitInterval(e)$ or $\SplitInterval(r)$:]
        Splits the set of intervals into two sets of intervals at element $e$ or rank $r$.
\end{description}

A gap has a \emph{``sidesness''}, which denotes the number of sides the gap has had a query. 
Denote a side as a \emph{queried side}, if that rank has been queried before (cf.\ \cite{SandlundWild20}).
If there have been no queries yet, the (single) gap is a $0$-sided gap.
When a query in a $0$-sided gap occurs, two new gaps are created which are both $1$-sided. 
Note that both gaps which have a single query on either the left or right side is called $1$-sided. 
When queries have touched both sides, the gap is $2$-sided.

We obtain the following \IO costs for the above operations.
\begin{theorem}
\label{thm:apx_interval_structure}
    There exists a data structure on an ordered set, that maintains a set of intervals, supporting 
    \begin{itemize}
        \item \InsertInterval in worst-case $\Oh{\logB \logB \lrSize{\Delta_i}}$ \IOs, 
        \item \DeleteInterval in amortized $\Oh{\frac{1}{B} \log_2 \lrSize{\Delta_i} + \logB \logB \lrSize{\Delta_i}}$ \IOs, 
        \item \ChangeInterval in worst-case $\Oh{\logB \logB \lrSize{\Delta_i}}$ \IOs, if the element is moved towards the nearest queried side or amortized $\Oh{\frac{1}{B} \log_2 \lrSize{\Delta_i} + \logB \logB \lrSize{\Delta_i}}$ \IOs otherwise, and
        \item \SplitInterval in amortized $\Oh{\frac{1}{B} \log_2 \lrSize{\Delta_i} + \logB \logB \lrSize{\Delta_i} + \frac{1}{B} x \log_2 c}$ \IOs.
    \end{itemize}
    Here $\lrSize{\Delta_i}$ denotes the number of elements contained in all intervals, and $x$ and $cx$ for $c \ge 1$ are the resulting sizes of the two sets produced by a split.
    The space usage is $\Oh{\lrSize{\Delta_i} / B}$~blocks.
\end{theorem}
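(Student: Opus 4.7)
My plan is to follow the framework of Sandlund and Wild~\cite{SandlundWild20}, with two adaptations for the external-memory model: store every interval inside a single blocked linked list (so that a scan of $x$ elements costs $\Oh{x/B}$ \IOs), and index the splitters between consecutive intervals by an augmented B-tree that stores subtree sizes for rank queries and a pointer into the linked list at each leaf. Since the number of intervals never exceeds $\lrSize{\Delta_i}$, the total space is $\Oh{\lrSize{\Delta_i}/B}$ blocks.

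The key structural invariant is that the number of intervals inside $\Delta_i$ is $\Oh{\log_2 \lrSize{\Delta_i}}$, which makes every B-tree navigation cost $\Oh{\logB \log_2 \lrSize{\Delta_i}} = \Oh{\logB \logB \lrSize{\Delta_i}}$ \IOs. The invariant is maintained by a merging rule that fires only after $\lrSize{\Delta_i}$ has dropped by a constant factor from its value at the last rebuild; when fired, it scans the B-tree at cost $\Oh{\frac{1}{B}\log_2 \lrSize{\Delta_i}}$ and merges adjacent intervals that no longer fit a target size profile. The profile I would use mirrors the multiple-selection optimum~\cite{DobkinMunro1981}: for a $1$-sided gap queried on the left, the $j$-th interval from the queried side has target size $T_{i,j} \approx 2^j$, and analogously (reflected/concave) for the other $1$- and $2$-sided cases.

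To absorb expensive restructurings and splits, I would use a potential of the form $\Phi = \sum_{i,j}\frac{1}{B}\lrSize{\Int{i,j}}\log_2\bigl(\lrSize{\Int{i,j}}/T_{i,j}\bigr)$, which lives in \IO units thanks to the $1/B$ factor~-- the essential external-memory adaptation of the internal-memory potential of~\cite{SandlundWild20}. With this potential in place, the three non-split operations are straightforward: \InsertInterval locates the target interval, prepends the element to the corresponding sublist, and raises $\Phi$ by only $\Oh{1/B}$, giving worst-case $\Oh{\logB \logB \lrSize{\Delta_i}}$ \IOs; \DeleteInterval is symmetric, with the scheduled rebuild amortized to $\Oh{\frac{1}{B}\log_2 \lrSize{\Delta_i}}$ per deletion and any actual merges paid from the potential drop; \ChangeInterval toward the nearest queried side moves the element into a slot with smaller target, so $\Phi$ does not grow and the worst-case bound follows, while the opposite direction increases $\Phi$ by at most $\Oh{\frac{1}{B}\log_2 \lrSize{\Delta_i}}$.

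The main obstacle~-- and the crux of the analysis~-- is \SplitInterval. Locating and physically splitting the targeted $\Int{i,j}$ costs $\Oh{\lrSize{\Int{i,j}}/B}$ \IOs (scan for element-splits, external deterministic selection~\cite{BrodalWild2023,BrodalWild2024} for rank-splits), which can reach $\Theta(\lrSize{\Delta_i}/B)$ and must therefore be paid from $\Delta\Phi$. The delicate point is that a split simultaneously (i) changes the sidedness of the enclosing gap, (ii) shifts the targets $T_{i,j}$ of \emph{every} interval in the two newly created gaps, and (iii) produces new gaps of sizes $x$ and $cx$. I would handle this by a case analysis on the initial sidedness of $\Delta_i$ (a $0$-sided gap becomes two $1$-sided, a $1$-sided becomes one $1$-sided and one $2$-sided, and a $2$-sided splits into two $2$-sided) and on which interval is split. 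In each case, a telescoping estimate should show that the new targets absorb the physical split cost up to the advertised additive $\Oh{\frac{1}{B}x\log_2 c + \frac{1}{B}\log_2 \lrSize{\Delta_i} + \logB \logB \lrSize{\Delta_i}}$ slack; the $x\log_2 c/B$ term must arise precisely because the smaller side of size $x$ now lies close to the new queried anchor, so its intervals exceed their targets by a factor of up to $c$. A final rebalancing pass that reshapes the interval structure to the new profile is then paid out of the same potential drop.
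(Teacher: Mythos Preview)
Your high-level architecture matches the paper: blocked linked list of intervals, augmented B-tree over the $\Oh{\log_2\lrSize{\Delta_i}}$ splitters, scan-based partitioning, external deterministic selection for rank splits, and a sidedness case analysis for \SplitInterval. The space bound and the $\Oh{\logB\logB\lrSize{\Delta_i}}$ navigation cost follow exactly as you say.

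The genuine gap is your potential. The paper does \emph{not} use an entropy-style $\frac{1}{B}\lrSize{\Int{i,j}}\log_2(\lrSize{\Int{i,j}}/T_{i,j})$; it uses the piecewise-\emph{linear} potential
\[
    \Phi \;=\; \tfrac{1}{B}\,N_{01} \;+\; \sum_{i,j}\Bigl(1+\tfrac{1}{B}\max\bigl\{\lrSize{\Int{i,j}}-\out{i,j},\,0\bigr\}\Bigr),
\]
where $\out{i,j}$ is the number of elements between $\Int{i,j}$ and the nearest \emph{queried} side, and $N_{01}$ counts elements in $0$- and $1$-sided gaps. With your entropy potential, a single \InsertInterval into an interval of size $s$ and target $T$ increases $\Phi$ by roughly $\frac{1}{B}\log_2(s/T)$, not $\Oh{1/B}$ as you assert; since intervals are allowed to grow arbitrarily large between queries (that is precisely why inserts are cheap), this is $\Theta\bigl(\frac{1}{B}\log_2\lrSize{\Delta_i}\bigr)$ in the worst case. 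That extra amortized charge would leak into the insert bound and destroy the priority-queue corollary. The paper's linear potential sidesteps this: one more element raises $\max\{\lrSize{\Int{i,j}}-\out{i,j},0\}$ by at most $1$, always.

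A second, related gap is the missing $N_{01}$ term. In the $1$-sided case where the queried side is the \emph{far} side (so $\out{i,j}\le cx$, not $\le x$), the interval being split can have size $\Theta(cx)$ while carrying essentially no $\Phi_{i,j}$-potential. The paper pays the $\Theta(cx/B)$ split cost from the drop in $N_{01}$: the $cx$ elements on the larger side move from a $1$-sided to a $2$-sided gap, releasing exactly $\frac{1}{B}cx$. Your potential has no analogue of this, and your ``telescoping estimate'' sketch does not explain where that payment comes from. Likewise, for a $0$-sided gap you do not specify $T_{i,j}$; with target equal to $\lrSize{\Delta_i}$ the initial potential is zero and the first split is unpayable, while with target $1$ the initial potential is $\Theta\bigl(\frac{1}{B}\lrSize{\Delta_i}\log_2\lrSize{\Delta_i}\bigr)$, overshooting the \Construct bound.
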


The structure is as follows.
Let the gap $\Delta_i$ contain multiple non-overlapping intervals $\Int{i, j}$, which contains the elements located in the gap.
The elements of the intervals are sorted between intervals, but not within an interval. Intervals therefore span a range of elements with known endpoints.
Each such interval is a blocked-linked-list containing the elements of the interval.
Additionally, we store an augmented B-tree over the intervals, allowing for efficiently locating the interval containing a given element (using $\Oh{\logB(\#\text{intervals})}$ \IOs).
The B-tree is augmented to hold in each node the total size of all intervals in the subtree of the node, which allows for efficient rank queries.

By packing all intervals into a single blocked-linked-list, and noting that there can be no more intervals than there are elements, the space usage of the intervals and the augmented B-tree over the intervals is $\Oh{\lrSize{\Delta_i} / B}$ blocks.

\subsection{Invariants and Potential}

Intuitively, there is a trade-off in maintaining intervals: having many small intervals reduces future query costs (typically dominated by the linear cost of splitting one interval), but increases the time to search for the correct interval upon insertions (and other operations).
Lazy search trees handle this trade-off as follows.
To bound the worst case insertion costs, we enforce a hard limit number on the number of intervals via a merging rule.
To amortize occasional high costs for queries, we accrue potential ($\Phi_{i,j}$ below) for any intervals that are ``too large'' relative to their proximity of a gap boundary.

Moreover, for queries we would like to have interval sizes grow exponentially towards the middle, as this allows for efficiently manipulating the intervals closest to the edge. 
This folklore rule seems too rigid to maintain upon queries, though, as many changes can be necessary.

\subsubsection{Merging Intervals}

We maintain the following invariant on the number of intervals:

\begin{description}
\item[{Invariant (\#Int):}]
	For a gap $\Delta_i$, the number of intervals $I_{i,j}$ is at most $4\log_2 \lrSize{\Delta_i}+2$. 
\end{description}

This allows for locating an interval using $\Oh{\logB \log_2 \lrSize{\Delta_i}} = \Oh{\logB \logB \lrSize{\Delta_i}}$ \IOs. This \IO cost also applies to recomputing the sizes of subtrees when updating an interval.
To bound the \IO cost of the operations, we define the following for the analysis.
Let $\out{i, j}$, the ``outside (elements) of $I_{i,j}$'', denote the number of elements between interval $\Int{i, j}$ and a \emph{queried} side of the gap; if both sides have been queries, $\out{i,j}$ is the towards the \emph{closest} side.
We denote an interval to be \emph{left} if the closest edge is to the left and \emph{right} otherwise.
Note for a $0$-sided gap, the outside is always $0$.  For a $1$-sided gap with a query on the left (resp.\ right), all intervals are left (resp.\ right).

When an insertion occurs in the gap, the interval containing the inserted element is located, and the inserted element is appended to the list of the interval. 
This increases $\lrSize{\Delta_i}$ by one, but does not create any new intervals, so (\#Int) remains satisfied.

When a deletion occurs, $\lrSize{\Delta_i}$ decreases, which may invalidate (\#Int).
We therefore need a way to reduce the number of intervals,
which we do by simply \emph{merging} two adjacent intervals.
Recall that elements within intervals are not sorted, so merging two intervals effectively just means ``forgetting'' about the pivot element currently separating them; here, we need to concatenate the list of elements.
The rule in lazy search trees is as follows, using the outside as a measure for how close to the middle an interval is:

\begin{description}
\item[Merge Rule (M):]
	If \( \lrSize{\Int{i, j}} + \lrSize{\Int{i, j+1}} < \out{i, j}\), merge $I_{i,j}$ and $I_{i,j+1}$.\\
	This rule only applies between intervals of the same type; a left interval is never merged with a right interval.
\end{description}

That is, an interval $\Int{i, j}$ is merged with its inner neighbor, if they combined contain fewer elements than outside of the intervals; they are therefore intuitively ``too small'' for where they reside in the gap.
Note that we do \emph{not} treat this rule as an invariant; we do not require it to be true at all times, 
but only use it to reduce the number of intervals when explicitly invoked in operation \MergeIntervals.

Suppose that \MergeIntervals has just merged all interval pairs that satisfy (M).
We then obtain bounds on the number of intervals as stated in the following two lemmas.

\begin{lemma}[The Merge rule on one side yields logarithmic internals]
\label{lem:merge_interval_onesided_count_bound}
    Let a set of $N$ elements partitioned in intervals that satisfy rule (M).
    Let further $I$ be one of the intervals and $k = \out{}$ be its number of outside elements.
    Then there are at most $\max\{1, 2 \log_2 \frac{N}{k} + 2\}$
    intervals inside of $I$ and on the same side as $I$.
\end{lemma}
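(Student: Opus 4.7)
The plan is to translate rule (M) into a doubling recurrence on the cumulative lengths of the relevant intervals and then solve it. Without loss of generality, $I$ is a left interval with $\mathbf{o}(I)=k$. Relabel the intervals that are inside of $I$ and still classified as left by $I^1,I^2,\dots,I^m$, ordered by increasing distance from the queried left boundary, and set $L_j = |I|+|I^1|+\cdots+|I^j|$ with the conventions $L_{-1}=0$ and $L_0=|I|$. The crucial identity is $\mathbf{o}(I^j) = k + L_{j-1}$ for every $j\ge 1$, since the elements to the left of $I^j$ are exactly the $k$ truly exterior elements together with $I,I^1,\dots,I^{j-1}$ (using that same-sided intervals form a contiguous stretch of the gap).

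Next, $I$ together with $I^1,\dots,I^m$ is a contiguous run of same-side neighbors, so rule (M) could in principle have been invoked on any two consecutive entries; since (M) has been exhausted, its condition \emph{fails} on each such pair, giving $|I^{j-1}| + |I^j| \ge \mathbf{o}(I^{j-1}) = k + L_{j-2}$ for $j\ge 1$ (where $j=1$ uses the pair $(I,I^1)$ together with $\mathbf{o}(I)=k$ and $L_{-1}=0$). Writing $|I^{j-1}| + |I^j| = L_j - L_{j-2}$ turns this into the clean recurrence
\[
  L_j + k \;\ge\; 2\,(L_{j-2} + k).
\]

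A two-step induction from the base values $L_{-1}+k = k$ and $L_0+k \ge k$ then yields $L_m + k \ge 2^{\lceil m/2\rceil}\,k$ for every $m\ge 0$. Combining with the global bound $L_m\le N$ gives $2^{\lceil m/2\rceil} \le 1 + N/k$, and since $k\le N$ (the outside elements live in the same set of $N$), we have $1 + N/k \le 2N/k$, hence $\lceil m/2\rceil \le 1 + \log_2(N/k)$, i.e.\ $m \le 2\log_2(N/k)+2$, as claimed. The $\max\{1,\cdot\}$ in the statement merely absorbs the degenerate case $m\le 1$, for which the bound is trivial.

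The main obstacle I anticipate is bookkeeping rather than the calculation itself: one must check that the ``inside, same-side'' subsequence is precisely the one to which rule (M) applies at every consecutive pair (which rests on the observation that all same-sided intervals form a contiguous block, so consecutiveness in the subsequence coincides with consecutiveness in the full interval order), and that the outside measure of $I^j$ telescopes as $k+L_{j-1}$. Once these are pinned down, the recurrence and its solution are a short calculation.
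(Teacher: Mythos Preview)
Your proof is correct and rests on the same core observation as the paper's: the failure of rule (M) on each consecutive same-side pair forces the cumulative length (equivalently, the outside) to at least double every two intervals. The paper packages this as an induction on $N$ that peels off two intervals at a time, whereas you set up and unroll the equivalent recurrence $L_j + k \ge 2(L_{j-2}+k)$ directly; these are two presentations of the same argument.
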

\begin{proof}
    The proof goes by induction in $N$. When $N = 1$, there can be at most one interval. If $N < k$, there can be at most one interval, as the merge rule would otherwise be violated. In both cases, the claim is satisfied.
    
    For $N \ge k$, we may assume the induction hypothesis on all smaller values of $N$.
    If there are less than three intervals, the bound holds from the additive constant.
    Otherwise, let $a$, $b$ and $c$ be the size of the first three intervals in order.
    By the merge rule, it holds that $a + b \ge k$. By definition $\out{c} = k + a + b \ge 2 k$.
    The number of intervals is then the two intervals $a$ and $b$, and the number of intervals from $c$, which by induction is $\le 2 + 2 \log_2 \frac{N - a - b}{\out{c}} + 2 \le 2 \log_2 \frac{N}{2k} + 4 = 2 \log_2 \frac{N}{k} + 2$.
    This concludes the proof.
\end{proof}

\begin{lemma}[(M) implies (\#Int)]
\label{lem:merge_intervals_gap_count}
    Let a gap $\Delta_i$ adhere to rule (M) on intervals.
    Then there are at most $4 \log_2 \lrSize{\Delta_i} + 2$ intervals.
\end{lemma}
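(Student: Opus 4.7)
The plan is to bound separately the number of intervals on each queried side of $\Delta_i$, since rule (M) only ever merges two intervals of the same side. A $0$-sided gap contains by construction a single interval, so the bound holds vacuously; I therefore focus on the intervals on a single side.

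Fix a side containing $N'$ elements distributed over intervals $\Int{1}, \Int{2}, \ldots, \Int{m}$, ordered from outermost (at the queried boundary) to innermost. Write $a_j = \lrSize{\Int{j}}$ and $S_j = a_1 + \cdots + a_j$, so that $\out{j} = S_{j-1}$. Rule (M) not firing between $\Int{j}$ and $\Int{j+1}$ translates directly into $a_j + a_{j+1} \geq S_{j-1}$ for every $1 \leq j < m$, which is the same as the key recurrence $S_{j+1} \geq 2 S_{j-1}$. Combined with the base cases $S_1 \geq 1$ and $S_2 \geq 2$ (intervals are non-empty), a straightforward induction yields $S_m \geq 2^{\lfloor m/2 \rfloor}$ for all $m \geq 1$, so setting $S_m = N'$ gives $m \leq 2 \log_2 N' + 1$.

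For a $1$-sided gap this already implies the claim with room to spare. For a $2$-sided gap, split the intervals into those on the left side ($N_L$ elements) and those on the right side ($N_R$ elements), where $N_L + N_R = \lrSize{\Delta_i}$. Summing the one-sided bound on each side yields
\[
    m \;\leq\; (2 \log_2 N_L + 1) + (2 \log_2 N_R + 1) \;=\; 2 \log_2(N_L N_R) + 2 \;\leq\; 4 \log_2 \lrSize{\Delta_i} + 2,
\]
where the last step uses $N_L, N_R \leq \lrSize{\Delta_i}$.

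The one point I expect to require care is the passage from the merge rule to the recurrence at $j = 1$: there $\out{1} = S_0 = 0$, so the merge condition gives no information at that boundary, and one must lean on the non-emptiness of the two outermost intervals to supply the base cases $S_1 \geq 1$ and $S_2 \geq 2$ directly. Beyond that, the whole argument is just a clean rewriting of rule (M) as geometric growth of the prefix sums of interval sizes.
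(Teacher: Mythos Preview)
Your proof is correct and follows the same overall decomposition as the paper: dispose of the $0$-sided case, bound each queried side separately, and combine via $\log(N_L N_R)$. The only real difference is in how the one-sided count is established. The paper invokes Lemma~\ref{lem:merge_interval_onesided_count_bound} (proved by induction on the number of elements, and stated more generally for the intervals inside any fixed interval with outside $k$) to get at most $2\log_2 N' + 2$ intervals beyond the outermost one on each side, and then needs the AM--GM estimate $N_L N_R \le \lrSize{\Delta_i}^2/4$ to land exactly on $4\log_2\lrSize{\Delta_i}+2$. You instead rewrite rule~(M) directly as the prefix-sum recurrence $S_{j+1}\ge 2S_{j-1}$, which yields the slightly sharper one-sided bound $m\le 2\log_2 N'+1$; this lets you get away with the cruder product bound $N_L N_R \le \lrSize{\Delta_i}^2$ and still hit the same constant. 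Your argument is a touch more elementary and self-contained; the paper's auxiliary lemma is more general and is reused later in the amortized analysis of \SplitInterval to count the $LL$ intervals.
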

\begin{proof}
    The proof goes by case analysis on the sidesness of $\Delta_i$.
    If the gap is $0$-sided, there must be one interval.
    If the gap is $1$-sided, the first interval has size at least 1, and there must then be at most $2 \log \Delta_i + 2$ remaining intervals by \wref{lem:merge_interval_onesided_count_bound}.
    In both cases the bound holds.

    Otherwise, the gap must be $2$-sided, and the merge rule holds from both left and right. The outermost intervals have size at least 1. Let $L$ and $R$ denote the number of elements in left and right intervals respectively.
    By \wref{lem:merge_interval_onesided_count_bound}, the number of intervals is at most
    \[ 
    	2 + \lrParens{2 \log_2 L + 2} + \lrParens{2 \log_2 R + 2} 
    	\;=\; 
    	6 + 2 \log_2 \lrParens{ L \cdot R } 
    	\;\le\; 
    	6 + 2 \log_2 \frac{\Delta_i^2}{4} 
    	\;=\; 
    	4 \log_2 \Delta_i + 2 \; . \popQED 
    \]
\end{proof}

\subsubsection{Potential}

When a query occurs, an interval $\Int{i, j}$ is split around some element or rank. The interval can be located using the B-tree over intervals on either element or rank in $\Oh{\logB \logB \lrSize{\Delta_i}}$ \IOs.
For splitting around an element, the interval is scanned and partitioned using $\Oh{\lrSize{\Int{i,j}} / B}$ \IOs.
For splitting around a rank, deterministic selection~\cite{BlumFloydPrattRivestTarjan1973} is applied, which uses $\Oh{\lrSize{\Int{i,j}} / B}$ \IOs (see, e.g., \cite{BrodalWild2023,BrodalWild2024}).
In both cases the number of \IOs grows with the interval size.

We therefore need a potential function, which balances the time to split an interval with how close to the border it is, which is defined using the outside of an interval.
For $1$-sided gaps, the outside however only measures toward one side of the gap, and we therefore need further potential for the $1$-sided intervals. Let $N_{01}$ be the number of elements contained in total in $0$- and $1$-sided gaps.
Further, we need potential to efficiently merge intervals in a scan.

We let the potential be
\[ \Phi = \frac{1}{B} N_{01} + \sum_{i, j} \Phi_{i,j} \]
and let
\[ \Phi_{i,j} = 1 + \frac{1}{B} \max \left\{ \lrSize{\Int{i, j}} - \out{i, j} , 0 \right\} \; . \]

Using this potential function, we can analyze the amortized \IO cost of applying the merge rule on all intervals, which we denote as the \MergeIntervals operation.
This operation uses that scanning all keys of a B-tree can be done efficiently.

\begin{lemma}[Merging runtime]
\label{lem:merge_runtime}
    Let the number of intervals contained in the gap $\Delta_i$ be $k$.
    Then \MergeIntervals uses amortized $\Oh{k/B}$ \IOs.
\end{lemma}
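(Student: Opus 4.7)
The plan is to implement \MergeIntervals as a single in-order scan over the $k$ intervals of $\Delta_i$, visiting adjacent pairs along the way, and to pay for each actual merge out of the potential $\Phi_{i,j}$.

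First I would execute the scan. Maintaining a pointer to the leftmost leaf of the augmented B-tree over intervals and walking the linked leaf list from left to right costs $\Oh{\lceil k/B\rceil}$ \IOs. While scanning, we keep running sums of interval sizes from each side of $\Delta_i$ so that $\out{i,j}$ is available for the current interval on the fly; this lets rule~(M) be evaluated on every adjacent pair from locally cached data without extra \IOs. Each actual merge concatenates two blocked linked lists of elements in $\Oh{1}$ \IOs. Rather than performing $\Oh{\logB \logB \lrSize{\Delta_i}}$ work per individual B-tree deletion of a splitter, I would record all structural changes during the scan and bulk-rebuild the augmented B-tree over the $k' \le k$ surviving intervals at the end in another $\Oh{\lceil k'/B\rceil} = \Oh{\lceil k/B\rceil}$ \IOs. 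The actual cost of the whole operation is therefore $\Oh{\lceil k/B\rceil + m}$, where $m$ is the number of merges performed.

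The amortization handles the $\Oh{m}$ term via $\Phi$. Rule~(M) fires only on same-sided adjacent pairs $(\Int{i,j}, \Int{i,j+1})$, and with the convention that $j$ indexes from the closest queried boundary inward, $\Int{i,j}$ is the outer interval of the pair and $\out{i,j+1} = \out{i,j} + \lrSize{\Int{i,j}}$. The merge condition $\lrSize{\Int{i,j}} + \lrSize{\Int{i,j+1}} < \out{i,j}$ then forces $\lrSize{\Int{i,j}} < \out{i,j}$ and $\lrSize{\Int{i,j+1}} < \out{i,j} - \lrSize{\Int{i,j}} \le \out{i,j+1}$; moreover the merged interval inherits $\out{i,j}$ as its outside (its boundary-facing endpoint coincides with that of $\Int{i,j}$), and its size is $\lrSize{\Int{i,j}} + \lrSize{\Int{i,j+1}} < \out{i,j}$ by the same inequality. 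Consequently $\Phi_{i,j} = \Phi_{i,j+1} = \Phi_{\mathrm{new}} = 1$, so each merge releases exactly one unit of potential, which covers its $\Oh{1}$ bookkeeping cost. The amortized total is thus $\Oh{\lceil k/B\rceil} = \Oh{k/B}$.

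The main care point I anticipate is pinning down the indexing convention so that $\Int{i,j}$ is genuinely the outer interval of each merging pair, and hence $\out{\mathrm{new}} = \out{i,j}$; once this is settled, the three potential values read off directly from rule~(M) and the argument collapses to a one-line substitution. If bulk-rebuilding the interval B-tree turns out to be awkward to reconcile with other operations acting on the same gap, the individual splitter deletions can instead be charged against the global $\tfrac{1}{B}N_{01}$ reserve from $\Phi$, but the batched version gives the cleanest accounting.
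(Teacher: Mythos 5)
Your proposal is correct and takes essentially the same route as the paper: a scan over the intervals (with the $\out{i,j}$ values derived along the way), a bulk rebuild of the B-tree in $\Oh{k/B}$ \IOs, and paying the $\Oh{1}$ cost of each merge out of the unit of potential released, since rule~(M) forces $\Phi_{i,j}=\Phi_{i,j+1}=\Phi_{\mathrm{new}}=1$ and hence $\Delta\Phi=-1$ per merge. The paper happens to split the scan into two passes (one to compute outsides, one to merge) before rebuilding, whereas you fuse them into one pass, but this is an implementation detail rather than a different argument.
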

\begin{proof}
    When \MergeIntervals is performed, first $\out{i, j}$ is computed for each $\Int{i,j}$ in $\Oh{k/B}$ \IOs, by scanning the B-tree. The B-tree may then be scanned again, to perform the merges on the intervals, and finally a new B-tree over the resulting intervals can be built. As the number of intervals can only decrease, the B-tree can be build using $\Oh{k/B}$ \IOs.

    Let $\Int{i,j}$ and $\Int{i,j+1}$ be a pair of intervals which are merged during the operation.
    It must hold that $\lrSize{\Int{i,j}} + \lrSize{\Int{i,j+1}} < \out{i,j}$.
    The merge is performed by following the pointers to the intervals and concatenating the lists, using $\Oh{1}$ \IO.
    Let $I_{i, j'}$ be the resulting interval of the merge. The potential then decreases by $\Phi_{i, j}$ and $\Phi_{i, j+1}$, and increases by $\Phi_{i, j'}$.
    From the merge rule and the definition of the outside, it follows that
    \begin{alignat*}{3}
        &\Phi_{i, j} &&= 1 + \frac{1}{B} \max \left\{ \lrSize{\Int{i,j}} - \out{i,j} , 0 \right\} &&= 1 \\
        &\Phi_{i, j+1} &&= 1 + \frac{1}{B} \max \left\{ \lrSize{\Int{i,j+1}} - \lrParens{\out{i,j} + \lrSize{\Int{i,j}}} , 0 \right\} &&= 1 \\
        &\Phi_{i, j'} &&= 1 + \frac{1}{B} \max \left\{ \lrParens{ \lrSize{\Int{i,j}} + \lrSize{\Int{i,j+1}}} - \out{i,j} , 0 \right\} &&= 1 \; .
    \end{alignat*}
    The difference in potential is therefore $\Delta \Phi = -1$, which covers the \IOs performed by the merge, which concludes the proof.
\end{proof}

\subsection{Updating Elements of the Intervals}

Upon an \InsertInterval operation, a new element $e$ is added to the gap $\Delta_i$. The interval $\Int{i, j}$ containing $e$ can be located using the B-tree, and the size of the interval incremented by $1$ using $\Oh{\logB \logB \lrSize{\Delta_i}}$ \IOs.
As previously discussed, the intervals need not be merged, as the invariant on the interval count is satisfied.
In the potential function $\Phi$, the outside of all inner intervals grows by $1$, which does not increase the value of $\Phi$. The size of $\Int{i, j}$ grows by $1$, and $N_{01}$ grows by at most $1$, which in total increases $\Phi$ by at most $\frac{2}{B}$.
\InsertInterval can therefore be performed in both amortized and worst-case $\Oh{\logB \logB \lrSize{\Delta_i}}$ \IOs.

When an \DeleteInterval operation occurs, a pointed to element is removed from some interval $\Int{i,j}$. The size is then decremented by $1$, and the B-tree must be updated to reflect so. If this decreases the size to $0$, the interval is removed, which in total uses $\Oh{\logB \logB \lrSize{\Delta_i}}$ \IOs.
In contrast to the \InsertInterval operation, upon removing an element, the outside of all inner intervals decreases, which makes the potential function $\Phi$ grow by at most $\frac{1}{B}$ for each such interval, of which there are $\Oh{\log_2 \lrSize{\Delta_i}}$.
In addition, the number of elements decreases without the number of intervals decreasing, which may leading to breaking the invariant on the bound on the number of intervals.
Note that merging a single interval is enough to ensure that the number of intervals stays bounded, and even so, only after linearly many deletes, the merge of a single interval needs to occur.
However, as the potential increases by $\Oh{\frac{1}{B} \log_2 \lrSize{\Delta_i}}$, and the \MergeIntervals uses the same amount of amortized \IOs, it is efficient enough to run \MergeIntervals after every delete.
In total, \DeleteInterval uses amortized $\Oh{\frac{1}{B} \log_2 \lrSize{\Delta_i} + \logB \logB \lrSize{\Delta_i}}$ \IOs.

When $\ChangeInterval(\mathit{ptr}, e')$ in performed, the element $e$ at pointer $\mathit{ptr}$ is changed to element $e'$. This new element must then be placed in the correct interval, which can be done using the B-tree over intervals using $\Oh{\logB \logB \lrSize{\Delta_i}}$ \IOs.
This then changes the size of at most two intervals, with one increasing and the other decreasing, which increases the potential by at most $\frac{1}{B}$.
This may also change the outside size of intervals. If the element is moved towards the closest queried side, then the outside may increase, which does not increase the potential.
Otherwise, the outside may grow by one, for each interval the element is moved past, which increases the potential by at most $\Oh{\frac{1}{B} \log_2 \lrSize{\Delta_i}}$.
In total, \ChangeInterval uses worst-case $\Oh{\logB \logB \lrSize{\Delta_i}}$ \IOs, if the element is moved towards the closest queried side, and otherwise amortized $\Oh{\frac{1}{B} \log_2 \lrSize{\Delta_i} + \logB \logB \lrSize{\Delta_i}}$ \IOs.

\subsection{Splitting the Intervals}

When a query occurs in gap $\Delta_i$, the gap is split at the query element into two new gaps $\Delta'_i$ and $\Delta'_{i+1}$.
Let the resulting gap sizes be $x$ and $cx$ for some $c \ge 1$ (see \wref{fig:query_split_x_cx}).
We assume here and throughout that the left resulting gap has size $x$; the other case is symmetric.

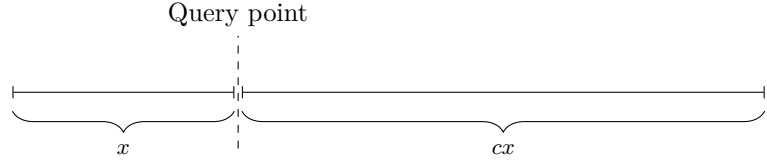
\begin{figure}
    \centering
    
    \begin{tikzpicture}
        \draw[|-|] (0, 0) -- (2.95, 0);
        \draw[|-|] (3.05, 0) -- (10, 0);
        \draw[dashed] (3,-0.75) -- (3,0.75) node[above] {Query point};
    
        \draw[decorate,decoration={brace,amplitude=8pt,mirror}] (0, -0.25) -- (2.95, -0.25) node[below, midway, yshift=-9pt] {\footnotesize $x$};
        \draw[decorate,decoration={brace,amplitude=8pt,mirror}] (3.05, -0.25) -- (10, -0.25) node[below, midway, yshift=-9pt] {\footnotesize $c x$};
    \end{tikzpicture}
    
    \caption{The query splits the gap in two, where some side contains $x$ elements and the other contains $cx$ elements, for some $c \ge 1$.}
    \label{fig:query_split_x_cx}
\end{figure}

Upon a \SplitInterval operation, the interval $\Int{i,j}$ containing the query element must be split, and the remaining intervals distributed into the two new gaps.
First, \MergeIntervals is applied on gap $\Delta_i$, to ensure structure on the intervals.
Then interval $\Int{i,j}$ is split around the query element into a left and right side of elements with respectively smaller and larger elements.
Let $\ell$ be the number of elements on the left side.
Using deterministic selection, the left side is partitioned into three intervals of sizes $\ell/2$, $\ell/4$ and $\ell/4$, such that the sizes are doubling from the query point and outwards. The right side is similarly split, resulting in six new intervals covering the elements of $\Int{i, j}$.
See \wref{fig:splitting_interval_on_query} for an illustration of this split.
From these new intervals, along with the remaining intervals of gap $\Delta_i$, two new gaps $\Delta'_i$ and $\Delta'_{i+1}$ are created, containing the intervals to the left and right side of the query point respectively.
To finally ensure invariant (\#Int) on the two new gaps, we lastly apply \MergeIntervals on them both, which concludes the steps of the \SplitInterval operation.

\begin{figure}
    \centering
    \begin{tikzpicture}
        \node[right] at (-3, 4) {Interval before query:};
        \node[right] at (-3, 1) {Intervals after query:};
    
        \draw[|-|] (0,3) -- (10,3) node[above, midway, yshift=3pt] {$\Int{i, j}$};
        \draw[dashed] (4,3.5) -- (4,-0.75) node[midway, fill=white, yshift=10pt] {Query point};

        \draw[|-|] (0,0) -- (1.95,0) node[below, midway, yshift=-3pt] {\footnotesize $\ell/2$};
        \draw[|-|] (2.05,0) -- (2.95,0) node[below, midway, yshift=-3pt] {\footnotesize $\ell/4$};
        \draw[|-|] (3.05,0) -- (3.95,0) node[below, midway, yshift=-3pt] {\footnotesize $\ell/4$};
        \draw[|-|] (4.05,0) -- (5.45,0) node[below, midway, yshift=-3pt] {\footnotesize $r/4$};
        \draw[|-|] (5.55,0) -- (6.95,0) node[below, midway, yshift=-3pt] {\footnotesize $r/4$};
        \draw[|-|] (7.05,0) -- (10,0) node[below, midway, yshift=-3pt] {\footnotesize $r/2$};

        \draw[decorate,decoration={brace,amplitude=8pt}] (0, 0.25) -- (3.95, 0.25) node[above, midway, yshift=6pt] {\footnotesize Left side, size $\ell$};
        \draw[decorate,decoration={brace,amplitude=8pt}] (4.05, 0.25) -- (10, 0.25) node[above, midway, yshift=6pt] {\footnotesize Right side, size $r$};
    \end{tikzpicture}
    
    \caption{A query in interval $\Int{i,j}$ splits the interval into six new intervals.}
    \label{fig:splitting_interval_on_query}
\end{figure}
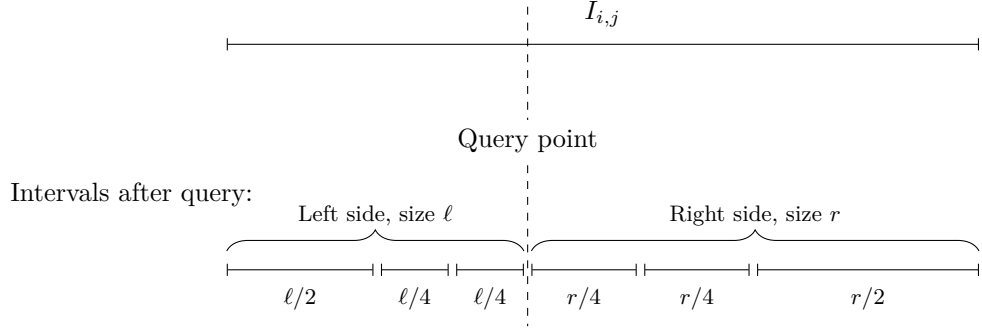

We next analyze the amortized \IO cost of \SplitInterval.
The interval $\Int{i,j}$ can be located using the B-tree in $\Oh{\logB \logB \lrSize{\Delta_i}}$~\IOs.
Splitting the interval is done using a constant number of scans and deterministic selections on the interval, which uses $\Oh{\frac{1}{B} \lrSize{\Int{i, j}}}$~\IOs in total.
When performing \MergeIntervals on $\Delta_i$, $\Delta'_i$, and $\Delta'_{i+1}$, then by \wref{lem:merge_runtime}, the invariant on the number of intervals (\#Int), and that splitting interval $\Int{i,j}$ creates $\Oh{1}$ new intervals, it holds that these merges uses amortized $\Oh{\frac{1}{B} \log_2 \lrSize{\Delta_i}}$~\IOs in total.
Note that these costs is independent on where $\Int{i, j}$ resides in the structure, and the sidesness of gap $\Delta_i$.
The change in potential however depends heavily on these properties. We shall therefore use case analysis for the change in potential.
These cases are on if the split interval is on the left or right, and on the sidesness of the gap $\Delta_i$.
We only analyze the difference in potential under the splitting of interval $\Int{i,j}$, as the \MergeIntervals operation already is amortized.

\subsubsection{\boldmath Gap $\Delta_i$ is $2$-sided.}

The split must result in two new $2$-sided gaps. Therefore, $N_{01}$ is unchanged, and can be disregarded for the potential difference.
As the gap is $2$-sided, $\out{i,j}$ is the smaller distance to the two sides of the gap.
It holds that $\Int{i,j}$ overlaps the query point, and therefore the distance to the left side (in $\Delta_i$) is at most $x$, and the distance to the right side (in $\Delta_i$) is at most $cx$. Therefore, $\out{i,j} \le x$.

There are three types of intervals: the intervals resulting from splitting $\Int{i,j}$, the intervals to the left of $\Int{i,j}$ in $\Delta_i$, and the intervals to the right of $\Int{i,j}$ in $\Delta_i$.
Let us first consider the potential on the intervals resulting from splitting $\Int{i,j}$. 
We denote this change in potential as $\Delta \Phi_I$.
The potential on $\Int{i,j}$ is $\Phi_{i,j}$, which is removed.
Then six new intervals are introduced.
Consider the intervals created on the left side of the query point, and let the total number of elements in these three intervals be $\ell$.
The rightmost one of these intervals is next to a query point in the new gap, and thus its outside is $0$. The potential on this interval is therefore $\frac{1}{B} \ell/4$.
The middle interval has the right interval as its outside, and as they are both of size $\ell/4$, there is no potential on this interval.
The leftmost interval may have either side as its outside; in the worst-case, it has a potential of $\frac{1}{B} \ell/2$.
This holds symmetrically for the new intervals on the right. As the size of the left and right in total is $\lrSize{\Int{i, j}}$, the total worst-case potential on the new intervals is $\frac{3}{4} \frac{1}{B} \lrSize{\Int{i, j}}$.
In total, the change in potential on the intervals created from $\Int{i, j}$ is
\begin{align*}
    \Delta \Phi_I \;\le\; & \; 5 && \text{The number of intervals created vs removed} \\
    & {}-  \frac{1}{B} \max \left\{ \lrSize{I_{i,j}} - o\lrParens{I_{i,j}} , 0 \right\} && \text{Removing $I_{i, j}$} \\
    & {}+ \frac{3}{4} \frac{1}{B} \lrSize{I_{i,j}} && \text{Introducing new intervals}
\end{align*}

To simplify this, we distinguish cases on the size of $\Int{i, j}$, and compare it to the \IO cost of splitting the interval.
If $\lrSize{I_{i,j}} \le x$, then $\Delta \Phi \le 5 + \frac{3}{4} \frac{1}{B} x = \Oh{\frac{1}{B} x}$. Otherwise, $\Delta \Phi \le 5 + \frac{1}{B} x - \frac{1}{4} \frac{1}{B} \lrSize{I_{i,j}}$.
The amortized cost of the split, including only the potential on the intervals of $\Int{i,j}$, is therefore $\Oh{\frac{1}{B} x}$ \IOs.
The total amortized cost must also cover the remaining intervals.

For the intervals to the left of $\Int{i,j}$, let the change in potential be denoted as $\Delta \Phi_L$.
In total there are at most $x$ elements located on the left side. As there is no new intervals introduced, the potential function on the left intervals is therefore bounded by the number of elements, and thus $\Delta \Phi_L \le \frac{1}{B} x$.
For the change in potential for the intervals on the right of $\Int{i,j}$, denoted $\Delta \Phi_R$, we must further distinguish whether $\Int{i, j}$ is a left or right interval.

\subparagraph{Case 1: $\Int{i,j}$ is a left interval.}

We partition the intervals on the right into three categories:
the intervals that were a left interval in $\Delta_i$ and remain a left interval in $\Delta'_{i+1}$ ($LL$), the intervals that were right intervals in $\Delta_i$ and become left intervals in $\Delta'_{i+1}$ ($RL$), and the intervals that were right intervals in $\Delta_i$ and remain right intervals in $\Delta'_{i+1}$ ($RR$).
Note that there cannot be any intervals that were left and became right, as the number of elements to the left decreases; if this was the side with the fewer elements before, it must remain so.
These categories are illustrated on \wref{fig:twosided_rightside_interval_categories}.

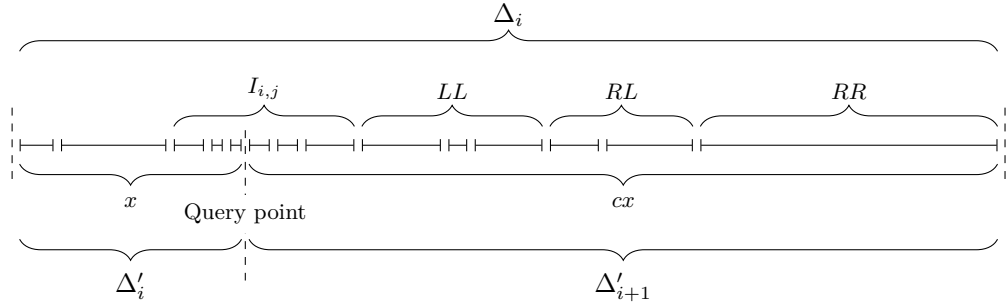
\begin{figure}
    \centering
    
    \begin{tikzpicture}
        \draw[|-|] (0, 0) -- (0.45, 0);
        \draw[|-|] (0.55, 0) -- (1.95, 0);

        \foreach \l/\r in {2/2.5, 2.5/2.75, 2.75/3, 3/3.375, 3.375/3.75, 3.75/4.5}
            \draw[|-|] (\l+0.05, 0) -- (\r-0.05, 0);

        \draw[|-|] (4.55, 0) -- (5.6, 0);
        \draw[|-|] (5.7, 0) -- (5.95, 0);
        \draw[|-|] (6.05, 0) -- (6.95, 0);

        \draw[|-|] (7.05, 0) -- (7.7, 0);
        \draw[|-|] (7.8, 0) -- (8.95, 0);

        \draw[|-|] (9.05, 0) -- (13, 0);
    
        \draw[decorate,decoration={brace,amplitude=8pt}] (2.05, 0.25) -- (4.45, 0.25) node[above, midway, yshift=7pt] {\footnotesize $\Int{i,j}$};
        \draw[decorate,decoration={brace,amplitude=8pt}] (4.55, 0.25) -- (6.95, 0.25) node[above, midway, yshift=7pt] {\footnotesize $LL$};
        \draw[decorate,decoration={brace,amplitude=8pt}] (7.05, 0.25) -- (8.95, 0.25) node[above, midway, yshift=7pt] {\footnotesize $RL$};
        \draw[decorate,decoration={brace,amplitude=8pt}] (9.05, 0.25) -- (13, 0.25) node[above, midway, yshift=7pt] {\footnotesize $RR$};
        
        \draw[dashed] (3, 0.2) -- (3, -1.8) node[midway, fill=white, yshift=-3.2pt] {\footnotesize Query point};
        \draw[dashed] (-0.1, 0.5) -- (-0.1, -0.5);
        \draw[dashed] (13.1, 0.5) -- (13.1, -0.5);
    
        \draw[decorate,decoration={brace,amplitude=8pt,mirror}] (0, -0.25) -- (2.95, -0.25) node[below, midway, yshift=-9pt] {\footnotesize $x$};
        \draw[decorate,decoration={brace,amplitude=8pt,mirror}] (3.05, -0.25) -- (13, -0.25) node[below, midway, yshift=-9pt] {\footnotesize $c x$};

        \draw[decorate,decoration={brace,amplitude=8pt}] (0, 1.25) -- (13, 1.25) node[above, midway, yshift=6pt] {$\Delta_{i}$};
        
        \draw[decorate,decoration={brace,amplitude=8pt,mirror}] (0, -1.25) -- (2.95, -1.25) node[below, midway, yshift=-9pt] {$\Delta'_i$};
        \draw[decorate,decoration={brace,amplitude=8pt,mirror}] (3.05, -1.25) -- (13, -1.25) node[below, midway, yshift=-9pt] {$\Delta'_{i+1}$};
    \end{tikzpicture}
    
    \caption{The categories of intervals upon splitting a $2$-sided gap, with $\Int{i,j}$ being a left interval.}
    \label{fig:twosided_rightside_interval_categories}
\end{figure}

First note that the number of intervals and their sizes do not change from $\Delta_i$ to $\Delta'_{i+1}$. The only change is in the size of the outsides.
For the intervals in $RR$, their outside does not change, and therefore their potential does not change.

Note that \MergeIntervals was performed before the split, and therefore rule (M) is satisfied on the intervals in $\Delta_i$. Also note that splitting another interval does not alter the outside values of other intervals.
Therefore, for the intervals in $LL$, $x$ elements are removed from their outside in $\Delta'_{i+1}$. We must then bound the number of intervals in $LL$, to bound the change in the potential.
As the outside of the leftmost interval in $LL$ has an outside of at least $x$ in $\Delta_i$, and there are at most $cx$ elements in $LL$, as they are all contained in $\Delta'_{i+1}$, then by \wref{lem:merge_interval_onesided_count_bound}, the number of intervals in $LL$ is at most $2 \log_2 \frac{cx}{x} + 2 = \Oh{\log_2 c}$.
The potential increase of $LL$ is therefore bounded by $\Oh{\frac{1}{B} x \log_2 c}$.

For the intervals in $RL$, as they were a right interval before in $\Delta_i$, but become a left interval in $\Delta'_{i+1}$ due to $x$ elements being removed from the left, their outside can decrease by at most $x$, when moved to $\Delta'_{i+1}$.
Consider all but the rightmost interval of $RL$. As these are left intervals of $\Delta'_{i+1}$ without the last left interval, it must hold that they are contained entirely in the left half of the $cx$ elements of $\Delta'_{i+1}$. As they are right intervals in $\Delta_i$, then their old outside is to the right in $\Delta_i$, and therefore the outside of the last considered interval is in $\Delta_i$ at least the right half of the $cx$ elements.
By \wref{lem:merge_interval_onesided_count_bound}, the number of intervals in $RL$, including the rightmost interval, is bounded by $2 \log_2 \frac{cx/2}{cx/2} + 2 + 1 = 3$.
The potential increase of $RL$ is therefore bounded by $\frac{3}{B} x$.

In total, $\Delta \Phi_R = \Oh{\frac{1}{B} x \log_2 c}$.

\subparagraph{Case 2: $\Int{i,j}$ is a right interval.}

As there are more elements to the left of $\Int{i,j}$ than to the right in $\Delta_i$, there are at most $x$ elements remaining to the right of $\Int{i,j}$.
As there are no new intervals introduced, the potential function on the right intervals is therefore bounded by the number of elements, and thus $\Delta \Phi_R \le \frac{1}{B} x$.

\subsubsection{\boldmath Gap $\Delta_i$ is $0$-sided.}

As there have been no queries to the left or right of the gap, then there can only be a single interval in $\Delta_i$, which is $\Int{i, j}$. The bottom of \wref{fig:splitting_interval_on_query} therefore illustrates this case.
Upon splitting this interval, two $1$-sided gaps are created. All elements are moved into these gaps from $\Delta_i$, and therefore $N_{01}$ is unchanged, and can be disregarded for $\Delta\Phi$.
For both gaps, the queried side is at the query point.
Therefore, the outside of each interval in $\Delta'_i$ and $\Delta'_{i+1}$ is towards the middle.
Following the analysis of $\Delta \Phi_I$ in the $2$-sided case, it holds that the new potential is $\frac{1}{4} \frac{1}{B} \lrSize{\Int{i,j}}$, as the outside of the outermost intervals is towards the middle, and therefore $\Delta \Phi = 5 -\frac{3}{4} \frac{1}{B} \lrSize{\Int{i,j}}$.

\subsubsection*{\boldmath Gap $\Delta_i$ is $1$-sided.}

As before, we assume for the current query, that the left resulting new gap has size $x$.
If in $\Delta_i$, the queried side was the left side, then gap $\Delta'_i$ becomes a $2$-sided gap, and $\Delta'_{i+1}$ remains a $1$-sided gap, with the queried side on the left. This case is illustrated on \wref{fig:onesided_left_interval_categories}.
The outside of $\Int{i,j}$ in $\Delta_i$ remains bounded by $x$, and the analysis on the change in potential of $\Int{i,j}$ remains the same as in the $2$-sided gap analysis, which bounds the potential increase by $\Oh{\frac{1}{B} x}$. For the intervals on the left, the potential may be at most $\frac{1}{B} x$, as in the $2$-sided case. However, $N_{01}$ decreases by $x$, which bounds the increase in potential on the left to at most $0$.
For the intervals on the right, it must hold that they were all left intervals in $\Delta_i$ and that they remain left intervals in $\Delta'_{i+1}$.
Following the argument for $LL$ from the $2$-sided gap case above, the increase in potential is bounded by $\Oh{\frac{1}{B} x \log_2 c}$.

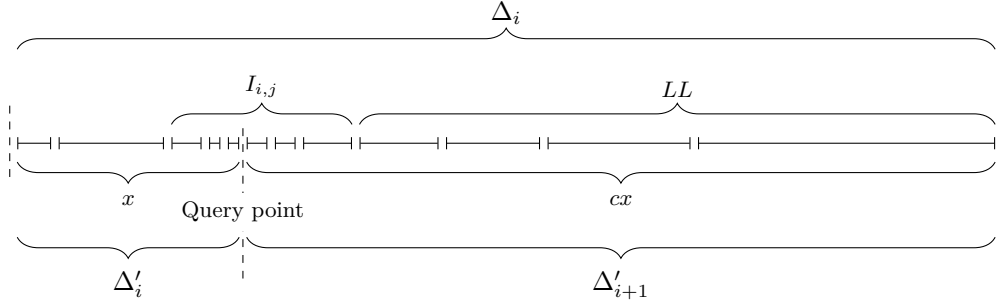
\begin{figure}
    \centering

    \begin{tikzpicture}
        \draw[|-|] (0, 0) -- (0.45, 0);
        \draw[|-|] (0.55, 0) -- (1.95, 0);

        \foreach \l/\r in {2/2.5, 2.5/2.75, 2.75/3, 3/3.375, 3.375/3.75, 3.75/4.5}
            \draw[|-|] (\l+0.05, 0) -- (\r-0.05, 0);

        \draw[|-|] (4.55, 0) -- (5.6, 0);
        \draw[|-|] (5.7, 0) -- (6.95, 0);
        \draw[|-|] (7.05, 0) -- (8.95, 0);
        \draw[|-|] (9.05, 0) -- (13, 0);
    
        \draw[decorate,decoration={brace,amplitude=8pt}] (2.05, 0.25) -- (4.45, 0.25) node[above, midway, yshift=7pt] {\footnotesize $\Int{i,j}$};
        \draw[decorate,decoration={brace,amplitude=8pt}] (4.55, 0.25) -- (13, 0.25) node[above, midway, yshift=7pt] {\footnotesize $LL$};
        
        \draw[dashed] (3, 0.2) -- (3, -1.8) node[midway, fill=white, yshift=-3.2pt] {\footnotesize Query point};
        \draw[dashed] (-0.1, 0.5) -- (-0.1, -0.5);
    
        \draw[decorate,decoration={brace,amplitude=8pt,mirror}] (0, -0.25) -- (2.95, -0.25) node[below, midway, yshift=-9pt] {\footnotesize $x$};
        \draw[decorate,decoration={brace,amplitude=8pt,mirror}] (3.05, -0.25) -- (13, -0.25) node[below, midway, yshift=-9pt] {\footnotesize $c x$};

        \draw[decorate,decoration={brace,amplitude=8pt}] (0, 1.25) -- (13, 1.25) node[above, midway, yshift=6pt] {$\Delta_{i}$};
        
        \draw[decorate,decoration={brace,amplitude=8pt,mirror}] (0, -1.25) -- (2.95, -1.25) node[below, midway, yshift=-9pt] {$\Delta'_i$};
        \draw[decorate,decoration={brace,amplitude=8pt,mirror}] (3.05, -1.25) -- (13, -1.25) node[below, midway, yshift=-9pt] {$\Delta'_{i+1}$};
    \end{tikzpicture}
    
    \caption{The categories of intervals upon splitting a $1$-sided gap, with left queried side.}
    \label{fig:onesided_left_interval_categories}
\end{figure}

\begin{figure}
    \centering

    \begin{tikzpicture}
        \draw[|-|] (0, 0) -- (1.95, 0);

        \foreach \l/\r in {2/2.5, 2.5/2.75, 2.75/3, 3/3.375, 3.375/3.75, 3.75/4.5}
            \draw[|-|] (\l+0.05, 0) -- (\r-0.05, 0);

        \draw[|-|] (4.55, 0) -- (5.95, 0);
        \draw[|-|] (6.05, 0) -- (7.7, 0);

        \draw[|-|] (7.8, 0) -- (8.95, 0);
        \draw[|-|] (9.05, 0) -- (11.95, 0);
        \draw[|-|] (12.05, 0) -- (13, 0);
    
        \draw[decorate,decoration={brace,amplitude=8pt}] (2.05, 0.25) -- (4.45, 0.25) node[above, midway, yshift=7pt] {\footnotesize $\Int{i,j}$};
        \draw[decorate,decoration={brace,amplitude=8pt}] (4.55, 0.25) -- (7.7, 0.25) node[above, midway, yshift=7pt] {\footnotesize $RL$};
        \draw[decorate,decoration={brace,amplitude=8pt}] (7.8, 0.25) -- (13, 0.25) node[above, midway, yshift=7pt] {\footnotesize $RR$};
        
        \draw[dashed] (3, 0.2) -- (3, -1.8) node[midway, fill=white, yshift=-3.2pt] {\footnotesize Query point};
        \draw[dashed] (13.1, 0.5) -- (13.1, -0.5);
    
        \draw[decorate,decoration={brace,amplitude=8pt,mirror}] (0, -0.25) -- (2.95, -0.25) node[below, midway, yshift=-9pt] {\footnotesize $x$};
        \draw[decorate,decoration={brace,amplitude=8pt,mirror}] (3.05, -0.25) -- (13, -0.25) node[below, midway, yshift=-9pt] {\footnotesize $c x$};

        \draw[decorate,decoration={brace,amplitude=8pt}] (0, 1.25) -- (13, 1.25) node[above, midway, yshift=6pt] {$\Delta_{i}$};
        
        \draw[decorate,decoration={brace,amplitude=8pt,mirror}] (0, -1.25) -- (2.95, -1.25) node[below, midway, yshift=-9pt] {$\Delta'_i$};
        \draw[decorate,decoration={brace,amplitude=8pt,mirror}] (3.05, -1.25) -- (13, -1.25) node[below, midway, yshift=-9pt] {$\Delta'_{i+1}$};
    \end{tikzpicture}
    
    \caption{The categories of intervals upon splitting a $1$-sided gap, with right queried side.}
    \label{fig:onesided_right_interval_categories}
\end{figure}
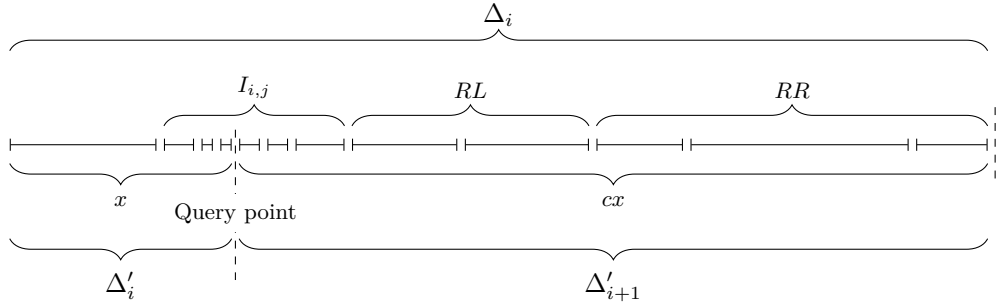

The case when the queried side in $\Delta_i$ is to the right is illustrated on \wref{fig:onesided_right_interval_categories}.
In this case, the outside of $\Int{i, j}$ is no longer bounded by $x$, but by $cx$.
However, all elements on the right leave the $1$-sided gap $\Delta_i$, and enter the $2$-sided gap $\Delta'_{i+1}$; therefore $N_{01}$ decreases by $cx$, and so the potential decreases by $\frac{1}{B} cx$.
The intervals on the right must have been right intervals in $\Delta_i$, and may be both left or right intervals in $\Delta'_{i+1}$. Using the same analysis as in the $2$-sided case for the $RL$ and $RR$ intervals, it holds that the potential on the right interval increase by at most $\frac{3}{B} x$.
The intervals to the left spans at most $x$ elements, and the potential on these are at most $\frac{1}{B} x$, as in the $2$-sided case.
In total 
\begin{align*}
    \Delta \Phi \;\le\; & \; 5 && \text{The number of intervals created vs removed} \\
    &{} -  \frac{1}{B} \max \left\{ \lrSize{I_{i,j}} - o\lrParens{I_{i,j}} , 0 \right\} && \text{Removing $I_{i, j}$} \\
    &{} + \frac{3}{4} \frac{1}{B} \lrSize{I_{i,j}} && \text{Introducing new intervals} \\
    &{} - \frac{1}{B} cx && \text{Elements removed from $1$-sided gap} \\
    &{} + \frac{3}{B} x + \frac{1}{B} x && \text{Potential change on the remaining intervals}
\end{align*}

Recall that $\out{i,j} \le cx$. We then case analysis on the size of $\Int{i,j}$.
If $\lrSize{\Int{i,j}} \le cx$, then $\Delta \Phi \le 5 - \frac{1}{4} \frac{1}{B} cx + \frac{4}{B} x$. As splitting the interval uses $\Oh{\frac{1}{B} \lrSize{\Int{i,j}}}$ \IOs, the amortized cost of this case is $\Oh{\frac{1}{B} x}$ \IOs.
Otherwise, if $\lrSize{\Int{i,j}} > cx$, then $\Delta \Phi \le 5 - \frac{1}{4} \frac{1}{B} \lrSize{\Int{i,j}} + \frac{4}{B} x$, and the amortized cost of this case is therefore also $\Oh{\frac{1}{B} x}$ \IOs.

\subsubsection{Total \IO Cost}

In total, the increase in potential on the intervals, and therefore also the amortized \IO cost of the split, is bounded by $\Oh{\frac{1}{B} x \log_2 c}$.
The \SplitInterval operation, which includes searching for the query point and the \MergeIntervals operations, therefore uses amortized $\Oh{\frac{1}{B} \log_2 \lrSize{\Delta_i} + \logB \logB \lrSize{\Delta_i} + \frac{1}{B} x \log_2 c}$ \IOs in total.

This concludes the proof of \wref{thm:apx_interval_structure}.

\section{Supporting Lemmas}
\label{app:support_lemma}

\begin{lemma}[Bounding the Sum of Double Exponential Function]
\label{lem:theta_double_exp_sum}
    Let $a, b \ge 2$. Then
    \[ \sum_{i=0}^{n} a^{b^i} = \OhTheta{a^{b^n}} \; . \]
\end{lemma}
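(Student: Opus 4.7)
The lower bound is immediate: since the last term $a^{b^n}$ is already one summand and all terms are positive, $\sum_{i=0}^n a^{b^i} \ge a^{b^n}$.

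For the upper bound I would prove the sharper statement $\sum_{i=0}^n a^{b^i} \le 2 \, a^{b^n}$ by induction on $n$. The base case $n=0$ is trivial. For the inductive step, assume the bound holds for $n-1$; then
\[
    \sum_{i=0}^n a^{b^i} \;=\; \sum_{i=0}^{n-1} a^{b^i} + a^{b^n} \;\le\; 2\,a^{b^{n-1}} + a^{b^n}.
\]
It remains to verify that $2\,a^{b^{n-1}} \le a^{b^n}$, or equivalently $a^{b^n - b^{n-1}} \ge 2$. Since $b^n - b^{n-1} = b^{n-1}(b-1) \ge 1$ whenever $b \ge 2$ and $n \ge 1$, and $a \ge 2$, we have $a^{b^{n-1}(b-1)} \ge 2$. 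Plugging this in yields $\sum_{i=0}^n a^{b^i} \le 2\,a^{b^n}$, completing the induction.

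The main (and only) subtlety is choosing the right inductive invariant: a naive induction of the form $\sum_{i=0}^n a^{b^i} = O(a^{b^n})$ without specifying the constant does not close, because the hidden constant would need to grow with $n$. Fixing the explicit constant $2$ makes the step work cleanly, and exploits precisely the doubly-exponential gap between consecutive exponents $b^{n-1}$ and $b^n$, which is where the hypothesis $a,b \ge 2$ enters. Combining the matching upper and lower bounds gives $\sum_{i=0}^n a^{b^i} = \Theta(a^{b^n})$, as required.
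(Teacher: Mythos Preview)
Your proof is correct and essentially identical to the paper's: both establish the lower bound from the last term and prove the upper bound $\sum_{i=0}^n a^{b^i} \le 2\,a^{b^n}$ by induction on $n$, with the key step being that $2\,a^{b^{n-1}} \le a^{b^n}$. The paper verifies this inequality via a logarithm argument ($\log_k 2 + 1 \le b$ for $k=a^{b^{n-1}}$) while you do it directly in the exponent ($a^{b^{n-1}(b-1)}\ge 2$), but this is purely cosmetic.
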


\begin{proof}
    The proof proceeds by showing the lower and upper bound separately. For the lower bound, as all terms of the sum is positive, then the sum is bounded below by the last term of the sum.
    To show the upper bound, it is shown by induction in $n$ that
    \[ \sum_{i=0}^{n} a^{b^i} \le 2 a^{b^n} \; . \]
    For $n = 0$ the inequality holds, as there is only one term which is equal to $a$.
    \[ \sum_{i=0}^{n} a^{b^i} = \sum_{i=0}^{0} a^{b^i} = a^{b^0} = a \le 2 a = 2 a^{b^0} = 2 a^{b^n} \; .\]
    For the induction step, let $n = n' + 1$, and let the inequality hold by induction for $n'$. Then
    \[ \sum_{i=0}^{n} a^{b^i} = \left( \sum_{i=0}^{n'} a^{b^i} \right) + a^{b^{n' + 1}} \le 2 a^{b^{n'}} + \left(a^{b^{n'}} \right)^b \]
    From here, the goal bound is twice that of the last term in the sum. It can therefore be shown by proving that the second term bounds the first. For simplicity, let $k = a^{b^{n'}}$. Then it holds that
    \[ 2 k \le k^b  \Longleftrightarrow  \log_k ( 2 k ) \le b  \Longleftrightarrow  \log_k (2) + 1 \le b \; . \]
    As $a, b \ge 2$ and $n' \ge 0$, then $\log_k (2) \le 1$, and the above inequality holds. It therefore holds that
    \[ 2 a^{b^{n'}} + \left(a^{b^{n'}} \right)^b \le 2 \left(a^{b^{n'}} \right)^b = 2 a^{b^{n}} \; , \]
    Concluding the proof.
\end{proof}

\end{document}